\newcommand{\ket}[1]{\mid #1 \rangle}
\newtheorem{theorem}{Theorem}
\newtheorem{lemma}[theorem]{Lemma}
\newtheorem{corollary}[theorem]{Corollary}
\newtheorem{remark}{Remark}
\newenvironment{proof}[1][Proof]{\begin{trivlist}
\item[\hskip \labelsep {\bfseries #1}]}{\end{trivlist}}
\newenvironment{definition}[1][Definition]{\begin{trivlist}
\item[\hskip \labelsep {\bfseries #1}]}{\end{trivlist}}
\newcommand{\qed}{\nobreak \ifvmode \relax \else
      \ifdim\lastskip<1.5em \hskip-\lastskip
      \hskip1.5em plus0em minus0.5em \fi \nobreak
      \vrule height0.75em width0.5em depth0.25em\fi}
\title{3-d quantum stabilizer codes with a power law energy barrier}			% used by \maketitle
\author{Kamil Michnicki }		% used by \maketitle
\date{\today}					% used by \maketitle
\affil{University of Washington, Department of Physics, Seattle, WA 98195-1560, USA}
\affil{kpm3@u.washington.edu}
\begin{document}
\maketitle						% automatic title!

\begin{abstract}
 We introduce a new primitive, called welding, for combining two stabilizer codes to produce a new stabilizer code. We apply welding to construct surface codes and then use the surface codes to construct solid codes, a variant of a 3-d toric code with rough and smooth boundaries.  Finally, we weld solid codes together to produce a $(O(L^3),1,O(L^{\frac{4}{3}}))$ stabilizer code with an energy barrier of $O(L^{\frac{2}{3}})$, which solves an open problem of whether a power law energy barrier is possible for local stabilizer code Hamiltonians in three-dimensions. The previous highest energy barrier is $O(\log L)$. Previous no-go results are avoided by breaking translation invariance.  
\end{abstract}

\tableofcontents

\section{Introduction}

The quantum analog of a hard disk drive is a self correcting quantum memory.  Whereas classical hard drives are in common use, it is currently unknown if a quantum hard drive can be built, even in principle.  A self-correcting quantum memory would protect quantum information from thermal noise by encoding a quantum state into the degenerate ground state of a Hamiltonian with a large energy barrier for local errors to produce logical errors. Natural thermalization with a cold reservoir would then keep the state of the system close to the ground state, provided the energy barrier were large enough. A theoretical understanding of self-correcting quantum memories would also give insight into whether topological phases of matter exist at non-zero temperature in three dimensions \cite{yoshida2011feasibility}.  The ferromagnetic hard disc drive, a good classical memory, has an energy penalty for neighboring magnetic domains that do not agree and thus to change the global polarization, locally, domains of magnetization must grow to encompass the entire material.  The total energy penalty is proportional to the perimeter of such domains, which accounts for the large energy barrier.  Self-correcting quantum memories have the further restriction that phase errors must also be protected against by a large energy barrier, not just errors in the computational basis. 

One could question the need for having a passive self-correcting quantum memory given the fact that doing active error correction on the toric code \cite{kitaev1997proceedings, kitaev2003fault}, one can construct an active memory whose storage time scales exponentially \cite{dennis2002topological} with the number of physical qubits used to encode a single protected qubit.  However, in order to do error correction on the toric code a nonlocal calculation must be done, i.e. one must find the minimum distance traversed by quasi-particles to annihilate. As we make the size of the code bigger, there is a point at which the classical processing becomes much longer than the necessary time step to perform error correction. Hence the exponential life-time, as a function of system size, has a limit. For this reason local error correction is necessary. One could do active error correction locally on the 2-d Ising model, a classical memory, by doing a majority vote within some finite neighborhood.  A similar statement holds for the 4-d toric code, which is known to be self-correcting \cite{dennis2002topological, alicki2008thermal}.  In both cases one uses the natural tension of the membrane operators to minimize the membrane locally.  Hence the existence of a high energy barrier is intimately linked with the ability to do active error correction locally.

The idea of the self-correcting quantum memory was first proposed in \cite{kitaev2003fault} and it was suggesting in \cite{dennis2002topological} that the 4-d toric code might be self correcting.  A 3-d subsystem code \cite{bacon2006operator} was hypothesized to be self correcting, though not confirmed. It was verified \cite{alicki2008thermal} that indeed the 4-d toric code has an exponential life-time when subjected to Markovian noise from a bath in the weak-coupling limit.  It was also shown \cite{castelnovo2007entanglement} that the 3-d toric code has two topological phase transitions, one at zero temperature corresponding to the string operators, and another at nonzero temperature, corresponding to the membrane operators. Thus the 3-d toric code can only act as a self-correcting classical memory.  It was proven \cite{bravyi2009no} that 2-d local self-correcting quantum memories are impossible for stabilizer codes. Another no-go theorem \cite{yoshida2011feasibility} was found for 3-d, local, translation-invariant stabilizer codes which have a bounded number of encoded qubits as a function of the system size.  Breaking the constraint on the number of qubits, a code with a $O(\log(L))$ energy barrier was found \cite{haah2011local, bravyi2011analytic,bravyi2011energy}. 

The current result is that a power law energy barrier can be reached for local stabilizer codes.

\begin{theorem}
\label{weldedCodeTheorem}
There exists a local stabilizer code Hamiltonian in 3-dimensions with qubits that fit inside of a box of side lengths $L$, that has an energy barrier of $O(L^{\frac{2}{3}})$.
\end{theorem}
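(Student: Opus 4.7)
The plan is to lower-bound the energy barrier of the welded construction; the matching upper bound will follow by exhibiting an explicit creation path that grows a minimum-weight logical operator one qubit at a time. Recall that the energy barrier is $\min \max_t |\mathrm{syn}(P_t)|$, where the minimum runs over sequences $I = P_0, P_1, \ldots, P_T$ of single-qubit Pauli updates with $P_T$ a nontrivial logical operator and $\mathrm{syn}(P_t)$ is the set of stabilizer generators anticommuting with $P_t$. Thus the task is to show that every such sequence visits some $P_t$ whose syndrome has weight $\Omega(L^{2/3})$.

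First I would invoke the structural description of logical operators developed when computing the code parameters: in each of the $O((L/\ell)^3)$ solid codes of side $\ell$ composing the construction, any nontrivial logical representative restricts to a membrane terminating on rough boundaries (or a dual string on smooth ones), while welded faces force the membrane to continue into the adjacent code. Every logical operator is therefore, after multiplication by stabilizers, a branching 2-dimensional surface of total area $\Omega(L^{4/3})$ (the distance bound), spread across the solid codes.

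The main estimate is then an isoperimetric argument run simultaneously in each solid code. For every code $C$ let $A_t(C)$ be the area of the reduced partial membrane $\mathrm{supp}(P_t)\cap C$ modulo $C$-local stabilizers; the in-$C$ syndrome weight is at least the perimeter of this partial membrane, which the 2-d isoperimetric inequality bounds below by $\Omega(\sqrt{A_t(C)})$ until the membrane meets $\partial C$. A continuity argument yields an intermediate time $t^*$ with $\sum_C A_{t^*}(C) = \Theta(L^{4/3})$. Because $\sqrt{\cdot}$ is concave, the adversary minimizes $\sum_C \sqrt{A_{t^*}(C)}$ by concentrating all partial area in a single solid code, giving syndrome $\Omega(\sqrt{L^{4/3}})=\Omega(L^{2/3})$; the choice $\ell = \Theta(L^{2/3})$ makes this single-code concentration just feasible (per-code capacity $\ell^2 = L^{4/3}$ matches the distance), so no other allocation is cheaper.

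The step I expect to be most delicate is the accounting at welded faces, where welding identifies qubits across two neighboring solid codes so that a partial membrane can in principle migrate between them without paying boundary. Ruling this out requires using the branching structure of welds explicitly: near every weld the partial membrane must either terminate, contributing its perimeter to the syndrome, or continue across, in which case its area is counted in the neighbor's $A_t(C)$. With that accounting in place the per-code isoperimetric lower bound propagates to a global $\Omega(L^{2/3})$ bound on the syndrome at $t^*$. The matching upper bound is then a routine sweep: build a minimum-area branching logical surface by adding one qubit at a time in an order that keeps the current perimeter at most $O(L^{2/3})$, which is feasible because the minimum logical operator attains precisely this isoperimetric ratio.
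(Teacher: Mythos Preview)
Your proposal has a genuine gap: you treat every logical operator as a ``branching 2-dimensional surface,'' but the code has two inequivalent logical operators with fundamentally different geometries, and the energy barrier is the \emph{minimum} over both. The $\bar{X}$ operator is indeed a 2-d membrane living inside a single solid code, and an isoperimetric argument of the sort you sketch can plausibly bound its energy barrier by $\Omega(d)=\Omega(L^{2/3})$. But the $\bar{Z}$ operator is a branching \emph{string}: inside each solid it is a 1-dimensional path between the two rough faces, and at each rough weld it branches into the adjacent solids. Its weight scales as $O(dR^{3})=O(L^{5/3})$, not $L^{4/3}$, and your per-code isoperimetric inequality gives nothing for it, because within a single solid a string segment can be grown with a \emph{constant} number of quasi-particles (this is exactly the point-defect phenomenon that makes the bare 3-d toric code a poor quantum memory). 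So your continuity-plus-concavity step, which relies on $A_{t}(C)$ being an area with syndrome $\gtrsim\sqrt{A_{t}(C)}$, simply does not apply to the $\bar{Z}$ sector.

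The mechanism that makes $\bar{Z}$ expensive is entirely different from isoperimetry: it is a parity argument across solids. Each solid is a flat-$X$ region in which the number of $X$-type quasi-particles has the parity of the number of $Z$-errors on its rough boundaries; hence the syndrome weight at any intermediate step is lower-bounded by the number of solids whose boundaries carry an odd error count. This reduces the $\bar{Z}$ energy barrier to the energy barrier of the 3-d Ising model on the coarse graph $G$ of solids, giving $\Omega(R^{2})=\Omega(L^{2/3})$. The paper's proof runs these two bounds separately---an $O(d)$ bound for $\bar{X}$ coming from a 2-d Ising reduction on the flat-$Z$ regions inside one solid, and an $O(R^{2})$ bound for $\bar{Z}$ coming from a 3-d Ising reduction on $G$---and then balances them by choosing $d=R^{2}$. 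The ``accounting at welded faces'' you flag as delicate is not a technicality to be cleaned up after the isoperimetric step; for $\bar{Z}$ it is the \emph{entire} source of the energy barrier, and it requires the parity/Ising argument rather than an area--perimeter inequality.
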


Whether local quantum codes with power law energy barriers exist or not, has been an open problem before this result. In section 2 we discuss some of the background of stabilizer codes and energy barriers. In section 3, we develop a new technique, called welding, for creating new stabilizer codes from pre-existing stabilizer codes.  We apply welding, in section 4, to surface codes to illustrate the principles and introduce a technique for lower bounding energy barriers. Finally, in section 5 we apply welding to the 3-d toric code, with smooth and rough boundaries, to produce the desired code with a power law energy barrier.

\section{Preliminaries}
\subsection{Stabilizer Codes}
It is assumed that the reader is familiar with the stabilizer formalism for quantum error correcting codes. The following is a short review. For a more detailed review see \cite{nielsen2002quantum,gottesman1997stabilizer}.  

\begin{definition}[Pauli group:] The {\em Pauli group} $G_n$ on $n$ qubits is defined to be \begin{equation}
G_n=\{i^k \otimes_{i=1}^n P_i : k\in \{0,1,2,3\} \ \& \ P_j \in \{I,X,Y,Z\}\}
\end{equation}where $X,Y,Z$ are Pauli operators acting on a single qubit.
\end{definition}
 
\begin{definition}[Stabilizer group:] A {\em stabilizer group} $S$ on $n$ qubits is a subgroup of the Pauli group $G_n$ where $-I\notin S$.
\end{definition} 
Consequently $S$ is abelian, its coefficients are real and can be taken to be +1 without lack of generality.\footnote{This is because there exists a local unitary transformation in the Pauli group that transforms all the coefficients to +1.} 

\begin{definition}[Normalizer:] The {\em normalizer} of the stabilizer group is $N(S)=\{g\in G_n : gSg^{-1}=S\}$.
\end{definition}

\begin{definition}[Logical operators:] The set of {\em logical operators} is defined as $\{h:  h\in \eta S, \ \eta \in N(S), I\notin \eta S\}$. All logical operators in the same equivalence class $\eta S$ act identically on a state $\ket{\psi}$ when $\ket{\psi}$ has a +1 eigenvalue for each stabilizer in the stabilizer group S. The minimum weight of the set of logical operators is said to be the distance of the code.
\end{definition}

\begin{definition}[Stabilizer code Hamiltonian:]
Given a generating set $R$ for a stabilizer group $S$, we can encode the code space of $S$ in the ground state of the Hamiltonian $H=-\sum_{g\in R} g$. Such a Hamiltonian is called a {\em stabilizer code Hamiltonian}.
\end{definition}

This paper deals almost exclusively with CSS codes \cite{calderbank1996good,steane1996multiple}, which are codes with a generating set composed of generators that are either tensor products of Pauli-$X$ and identity operators or tensor products of Pauli-$Z$ and identity operators, called $X$-type and $Z$-type operators respectively. For such codes, $Z$ and $X$-type stabilizers commute when they overlap on an even number of qubits.

\begin{definition}[Standard CSS form:]
A generating set $R$ for a CSS stabilizer group  is in {\em standard CSS form} if each element of R is either a tensor product of only single-qubit Pauli-$X$ operators, e.g. $X\otimes I \otimes X...$ but not $X\otimes Z \otimes I...$, or a tensor product of single-qubit Pauli-$Z$ operators.
\end{definition}

\begin{definition}
For a generating set $R$ of a CSS code in standard CSS form, {\em $R^X$} is the set of all $X$-type stabilizers in $R$ and {\em $R^Z$} is the set of all $Z$-type stabilizers of $R$.
\end{definition}

\subsection{Energy Barriers}
  In order to define the energy barrier of a system we first have to define the interaction between the system and the environment, that is, we have to define an error model. Roughly speaking, the minimum energy penalty during any sequence of allowed errors that enacts a logical operator is called the energy barrier.  In this paper, we'll assume that the environment is periodically making local measurements of the stabilizers in the Hamiltonian and that furthermore, errors are local and the logical operators are nonlocal so that errors must be in the Pauli group.  We'll make the further restriction to single-qubit Pauli errors. A logical error is then generated by a Pauli sequence:

\begin{definition}[Pauli sequence:] A {\em Pauli sequence} $\{P_i\}$ is a sequence of single-qubit Pauli operators, $P_i$, applied to a set of qubits in order.
\end{definition}

If we apply a Pauli sequence resulting in the operator $P=P_n .... P_1$, where $P_i$ is a single qubit Pauli operator, the energy of the quantum state goes up, with respect to the ground state, for each term in the Hamiltonian that does not commute with $P$. This is because if $g\ket{\psi}=\ket{\psi}$ and $Pg=-gP$ then $gP\ket{\psi}=-Pg\ket{\psi}=-P\ket{\psi}$. We will refer to violated stabilizer generators as quasi-particles. In applying a Pauli sequence, we violate terms in the Hamiltonian and create quasi-particles and we would like to know how to apply a logical operator via a Pauli sequence in order to minimize the maximum number of quasi-particles throughout the sequence.

\begin{definition}[Energy barrier of a Pauli sequence:]  Given a generating set $R$ for the stabilizer group $S$ and a Hamiltonian $H=-\sum_{g\in R} g$, the {\em energy barrier of a Pauli sequence} is the maximum number of quasi-particles in existence throughout the Pauli sequence. 
\end{definition}

\begin{definition}[Energy barrier of a Pauli operator:] The {\em energy barrier of a Pauli operator} $p$ is the minimum-energy barrier over all Pauli sequences that produce $p$ with respect to a generating set $R$ for the Hamiltonian $H=-\sum_{g\in R} g$. 
\end{definition}

\begin{definition}[Energy barrier of a stabilizer code Hamiltonian:] The {\em energy barrier of a stabilizer code Hamiltonian} is the minimum over all energy barriers of logical operators in the code with respect to a generating set $R$ for the Hamiltonian $H=-\sum_{g\in R} g$.
\end{definition}

We made the restriction to Pauli sequences in our error model for if we allowed a local Pauli-group elements $e$ as an error in our error model then a Pauli sequence could generate $e$ with a constant-energy barrier. This would only change the energy barrier of a logical operator by an additive constant. 

For CSS codes we can make a further restriction to Pauli sequences of just Pauli-X or just Pauli-Z operators, assuming the Hamiltonian is of the form $H=H_X+H_Z$ where $H_X$ and $H_Z$ are sums of X-type and Z-type stabilizer respectively. This is because a Pauli-$Z$ error will only violate $X$-type stabilizers and a Pauli-$X$ error will only violate $Z$-type stabilizers. We'll use the convention in \cite{kitaev2003fault} where a violated $Z$-type stabilizer is called a $Z$-type quasi-particle and a violated $X$-type stabilizer is called an $X$-type quasi-particle. The restriction to Pauli-$Z$ sequences or Pauli-$X$ sequences will give the minimum-energy barrier for the code, for mixing can only increase the number of quasi-particles for a particular logical operator.

\section{Welded Codes}

How do we engineer a large energy barrier for logical operators in a local stabilizer code Hamiltonian?  Consider the $X$-type stabilizers of a CSS stabilizer code Hamiltonian. If $Z$ errors create single quasi-particles or pairs of quasi-particles then a $Z$-type logical operator will have a constant energy barrier because quasi-particles, once created, are free to move around without creating new quasi-particles, perhaps traveling through a nontrivial path and annihilating.  What we need are boundaries for which when $X$-type quasi-particles move past these boundaries, new $X$-type quasi-particles are created, that is, single qubit Pauli errors on these boundaries create three or more quasi-particles.  What's more is that these boundaries, where bifurcations happen, have to be unavoidable in the sense that quasi-particles must travel through a large number of them in creating a logical operator. If we want the code to be topological and have a high energy barrier then the $Z$-type quasi-particles must also have such boundaries where they create new quasi-particles in moving past these boundaries.  The trick is to simultaneously create such boundaries for both $X$-type and $Z$-type quasi-particles. One such trick is welding.

The rest of this section will be devoted to developing the theory of welding. We'll define two types of welding X-type and Z-type, give conditions for when welding can be understood simply in terms of carefully chosen generating sets and explain how to identify logical operators.

Like concatenation, (see \cite{gottesman1997stabilizer} for a discussion) welding can be used  to combine two codes, $S_1$ and $S_2$, to produce a third code, $S$. Whereas concatenation produces a code that acts on $N=n_1 n_2$ qubits when $S_1$ and $S_2$ act on $n_1$ and $n_2$ qubits respectively, welding produces a code that acts on $N < n_1+n_2$ qubits.  

Let's motivate the definition of welding with an example. Consider two codes $S_1=\langle XX,ZZ \rangle$ and $S_2=\langle XX,ZZ \rangle$ where $XX$ is shorthand for $X\otimes X$.  Suppose we identify qubit two of $S_1$ with qubit one of $S_2$ so that we have three qubits total.  Properly, after the identification $S_1=\langle XXI,ZZI \rangle$ and $S_2=\langle IXX,IZZ \rangle$.  $\langle S_1\cup S_2 \rangle$ is not a stabilizer group since $(XXI)(IZZ)=-(IZZ)(XXI)$.  However if we adopt the $X$-type operators from $S_1$ and $S_2$ we can ask the question, which $Z$-type operators commute with the $X$-type operators?  In this simple case $ZZZ$ is the only one.  We define $S_1 \boxplus S_2 := \langle XXI,IXX,ZZZ \rangle$ where $\boxplus$ denotes a $Z$-type weld. Similarly we could have adopted $Z$-type operators and updated $X$-type operators in which case we would define $S_1 \boxtimes S_2 := \langle XXX,ZZI,IZZ\rangle$. This is called an $X$-type weld, as $X$-type stabilizers appear to be welded together.  Let's focus on the $Z$-type weld.  $ZZZ$ can be viewed as the operators $ZZI$ and $IZZ$ welded together on the second qubit and hence it is called a welded operator. Let $\theta_i(p)$ be the restriction of the Pauli operator $p$, after the identification, to the qubits that $S_i$ acts nontrivially on.  In this case $\theta_1(ZZZ)=ZZI$ and $\theta_2(ZZZ)=IZZ$. That $\theta_i(h)\in S_i$ for a $Z$-type operator $h$ is a general property for $Z$ type welds.  Also, if we define $W(h)$ to be the restriction onto qubits shared between the two codes we see that $h=\theta_1(h)\theta_2(h)W(h)$ where in this case $h=ZZZ$ and $W(h)=IZI$. This is also a general property of welding. Finally notice that if we remove $ZZZ$ from $S_1 \boxplus S_2$ and promote it to be a logical operator, then we have the repetition code, encoding one qubit with three qubits that corrects a single phase error. Some of the definitions introduced in this paragraph are important to this and further sections and so they are summarized in table \ref{keydefinitions}.

\begin{table}[H]
\begin{tabular}{|l|l|}
\hline
$\boxplus$ & A $Z$-type weld, where $Z$-type stabilizers are welded together. \\
\hline 
$\boxtimes$ & An $X$-type weld, where $X$-type stabilizers are welded together. \\
\hline
$\theta_i(p)$ & Restriction to qubits where stabilizers of $S_i$ act nontrivially. \\
\hline
$W(p)$ & Restriction to qubits shared between $S_1$ and $S_2$ after the identification. \\
\hline
\end{tabular}
\label{keydefinitions}
\caption{Important definitions pertaining to welding.}
\end{table}

The concept of welding codes can be applied to codes other than CSS codes and can also be applied to welding codes to themselves, i.e. by identifying qubits within a code and updating the stabilizer group. For the purpose of this paper we will only consider welding between CSS codes that each encode zero qubits. We consider zero encoded qubits because including logical operators into the stabilizer group provides for a more compact description of welding. We  regain logical operators after welding by choosing an operator from the generating set of the welded code to promote to a logical operator. The general procedure for welding CSS codes is as follows:

\begin{definition}[Z-type weld:] For two CSS codes $S_1$ and $S_2$ in standard CSS form the process of welding involves:
\end{definition}
\begin{itemize}
\item[1] Identifying distinct pairs of qubits between $S_1$ and $S_2$ where for each pair we contract that pair to a single qubit.
\item[2] Adopt all $X$-type operators of $S_1$ and $S_2$ into a new stabilizer group $S$.
\item[3] Include, into the stabilizer group S, any $Z$-type operator that commutes with all the $X$-type operators from $S_1$ and $S_2$.
\item[4] $S=S_1\boxplus S_2$ is generated by all such elements from steps 2 and 3.
\end{itemize}

Step 3 can be simplified by noting that if a $Z$-type operator g commutes with $X$-type operators of both $S_1$ and $S_2$, then $\theta_1(g) =g_1\in S_1$ and $\theta_2(g)=g_2\in S_2$ and so $g=\theta_1(g)\theta_2(g)W(g)=g_1 g_2 W(g_1)$ where $W(g_1)=W(g_2)$. The $Z$-type stabilizers of the welded code are all such combinations, thus the name "welding". An equivalent version of step number 3 is:

\begin{itemize}
\item[3'] For $Z$-type operators $g_1\in S_1^Z$ and $g_2 \in S_2^Z$ where $W(g_1)=W(g_2)$, include all $Z$-type operators $g_1 g_2 W(g_1)$ into the welded code $S$.
\end{itemize}

An $X$-type weld can be defined similarly to a $Z$-type weld where for an $X$-type weld we adopt $Z$-type operators and adopt all $X$-type operators that commute with the $Z$-type operators. The theorems in this section will be for Z-type welds as the symmetric theorems are automatically true for X-type welds.

\begin{theorem}
There are zero qubits encoded in the welded code if the codes $S_1$ and $S_2$ also have zero qubits encoded.  
\end{theorem}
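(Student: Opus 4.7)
The plan is to prove this by a short dimension count over $\mathbb{F}_2$. Writing $n_1, n_2$ for the qubit counts of $S_1, S_2$ and $k$ for the number of identified qubit pairs, the welded code acts on $N = n_1 + n_2 - k$ qubits. I would encode each CSS code $S_i$ as a pair of $\mathbb{F}_2$-subspaces $U_i, V_i \subseteq \mathbb{F}_2^{n_i}$ recording the supports of its $X$-type and $Z$-type generators. The hypothesis that $S_i$ encodes zero qubits is just $\dim U_i + \dim V_i = n_i$; commutation of the stabilizers forces $V_i \subseteq U_i^{\perp}$ in the standard dot product on $\mathbb{F}_2^{n_i}$, and since that pairing is nondegenerate the dimension count collapses this to $V_i = U_i^{\perp}$.

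Next I would push these subspaces into $\mathbb{F}_2^N$ along the identification maps from step 1 of the welding definition. Let $U \subseteq \mathbb{F}_2^N$ be the sum of the images of $U_1$ and $U_2$, so $U$ is the support space of the adopted $X$-type stabilizers produced by step 2. Step 3 of the definition declares the $Z$-type part of the welded code $S$ to be the \emph{entire} collection of $Z$-type operators commuting with everything in $U$; translated to supports in $\mathbb{F}_2^N$, this is precisely $V = U^{\perp}$.

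With those identifications the counting is immediate: $\dim V = N - \dim U$, so $\dim U + \dim V = N$, which is exactly the statement that the stabilizer rank of $S$ equals its qubit count, leaving zero encoded qubits. To finish I would also record that $S$ really is a stabilizer group; this is essentially built in, since its generators commute by construction and, with everything in standard CSS form and $+1$ coefficients, the only way for a product of generators to equal $\pm I$ is to have both trivial $X$-support and trivial $Z$-support, in which case the product is $+I$ and so $-I \notin S$.

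I do not foresee a serious obstacle. The one point a careful reader might worry about is that adopting $X$-type generators from both factors could in principle introduce dependencies after the qubit identifications, shrinking $\dim U$ below $\dim U_1 + \dim U_2$; but this is harmless, because $V$ is \emph{defined} as $U^{\perp}$ and its dimension automatically compensates so that $\dim U + \dim V = N$ regardless. The per-factor equality $V_i = U_i^{\perp}$ is not actually invoked by the dimension count itself — it would only be needed if one wanted to verify the compact reformulation $3'$ of step 3, and that follows from the maximality of each $S_i$.
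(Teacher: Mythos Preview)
Your argument is correct and rests on the same observation as the paper's proof: step~3 of the welding definition takes the $Z$-part of $S$ to be the \emph{entire} orthogonal complement $U^{\perp}$ of the adopted $X$-part, so no $Z$-type operator can sit in $N(S)\setminus S$. The paper expresses this as a one-line proof by contradiction (any putative logical $Z$ would commute with all $X$-type stabilizers and hence already lie in $S$), whereas you phrase it as a direct $\mathbb{F}_2$ dimension count $\dim U + \dim U^{\perp} = N$; these are the same fact in two languages. Your write-up is more explicit on auxiliary points the paper takes for granted (that $S$ is genuinely a stabilizer group, that $-I\notin S$), and you correctly notice something the paper does not remark on: the hypothesis that $S_1$ and $S_2$ encode zero qubits is never actually invoked---maximality in step~3 alone forces the conclusion.
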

\begin{proof}
The welded code constitutes a stabilizer group of the CSS kind.  If there were any encoded qubits after welding, then there would be an encoded $Z$ operator for each encoded qubit. By definition, these encoded $Z$ operators commute with all of the $X$-type stabilizers, which contradicts our assumption that we had included all such $Z$-type operators. $\Box$
\end{proof}

For a $Z$-type weld, the fact that we're including all $Z$-type operators that commute with the $X$-type operators seems like a lot of work and in fact we can reduce the work by picking generating sets of $S_1$ and $S_2$ wisely and then welding operators in the generating sets to produce a generating set for the welded code.  The generating sets have to be in a special form though and we'll need some definitions before we can state this form concisely.

\begin{definition}[Well Matched:] When generating sets $R_1$ and $R_2$ of $S_1$ and $S_2$, respectively, are in standard CSS form and match up on the weld, this is called {\em well matched}. More specifically, $R_1$ and $R_2$ are well matched if for every operator $h\in R_1^Z$ such that $W(h)\neq I$ there exists $g\in R_2^Z$ such that $W(h)=W(g)$. Similarly for every $h\in R_2^Z$ such that $W(h)\neq I$ there exists $g\in R_1^Z$ such that $W(h)=W(g)$.  
\end{definition}

\begin{definition}[Linearly independence on the weld:] $R_k$ is {\em linearly independent on the weld} if $R_k$ is in standard CSS form and for any independent subset of $Z$-type stabilizers $A=\{h\}\subset R_k^Z$, if $W(h)\neq I$ $\forall h \in A$ then $W(\prod_{h \in A}h)\neq I$. 
\end{definition}

Linear independence on the weld basically means that you can't multiply elements that act non-trivially on the weld in order to get an element that acts trivially on the weld, otherwise the welded code would not necessarily encode zero qubits.

\begin{theorem} 
\label{weldedGenSet}
Let $S_1$ and $S_2$ be two CSS codes that are welded together to produce $S$ via a $Z$-type weld.  If two generating sets $R_1$ and $R_2$ of $S_1$ and $S_2$, respectively, are in standard CSS form, well matched and linearly independent on the weld, then we can find a generating set $R$ of $S$ such that 1.) if $h\in R_i^X$ then $h\in R$ and 2.) if $h_1\in R_1^Z$ and $h_2 \in R_2^Z$ such that $W(h_1)=W(h_2)$ then $h_1 h_2 W(h_1) \in R$.
\end{theorem}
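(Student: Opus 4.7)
The plan is a two-part verification: first, check that every operator listed in conditions (1) and (2) actually lies in $S$; second, show that these operators (possibly augmented by the $h \in R_i^Z$ with $W(h)=I$, which can be viewed as satisfying (2) by pairing with the identity of the other code) generate all of $S$. Membership is immediate: each $h \in R_i^X$ is in $S$ by step 2 of the welding definition, and each welded operator $h_1 h_2 W(h_1)$ with $W(h_1)=W(h_2)$ is exactly of the form produced by step 3'.

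For generation, the $X$-type elements of $S$ are already products of $R_1^X \cup R_2^X$, so the work is in the $Z$-type case. Given $g \in S^Z$, I would invoke step 3' to write $g = \theta_1(g)\theta_2(g)W(\theta_1(g))$ with $\theta_i(g) \in S_i^Z$ and $W(\theta_1(g))=W(\theta_2(g))$, then expand $\theta_1(g) = \prod_{h \in A} h$ over some $A \subseteq R_1^Z$ and partition $A = A_{\mathrm{weld}} \sqcup A_{\mathrm{bulk}}$ according to whether $W(h)\neq I$ or $W(h)=I$. Well-matchedness supplies, for each $h \in A_{\mathrm{weld}}$, some $g_h \in R_2^Z$ with $W(g_h)=W(h)$, which gives a welded generator $w_h := h\cdot g_h\cdot W(h) \in R$. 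A direct computation (using that $Z$-type operators commute, that Paulis square to the identity, and that $W$ is multiplicative) should yield
\[
g \cdot \prod_{h \in A_{\mathrm{weld}}} w_h \;=\; \Bigl(\prod_{h \in A_{\mathrm{bulk}}} h\Bigr)\cdot q, \qquad q := \theta_2(g)\prod_{h \in A_{\mathrm{weld}}} g_h,
\]
with $q \in S_2^Z$ and $W(q) = W(\theta_2(g))\cdot W(\theta_1(g)) = I$.

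The hard part will be showing that this residual $q$ can be written as a product of $R_2^Z$-generators that individually have $W=I$; this is where linear independence on the weld becomes essential. I would write $q = \prod_{h' \in B} h'$ with $B \subseteq R_2^Z$, split $B = B_{\mathrm{weld}} \sqcup B_{\mathrm{bulk}}$, and note that $W(q) = \prod_{h' \in B_{\mathrm{weld}}} W(h') = I$; linear independence on the weld then forces $B_{\mathrm{weld}} = \emptyset$, so $q$ reduces to a product of bulk $R_2^Z$-generators. Combined with the bulk factor $\prod_{h \in A_{\mathrm{bulk}}} h$ from $R_1^Z$ and the welded generators $w_h$, this expresses $g$ entirely in terms of operators in $R$. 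The two hypotheses pull double duty: well-matchedness lets me pair each weld-acting generator of $R_1$ with a partner in $R_2$, while linear independence on the weld guarantees that the leftover piece in $S_2^Z$ really is bulk and hence generated by elements already in $R$.
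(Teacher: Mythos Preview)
Your proposal is correct and follows essentially the same route as the paper's proof, only spelled out in more detail: the paper compresses your argument into the single step ``find $h$ generated by $R$ with $\theta_1(hg)=I$ using well-matchedness, then $hg\in S_2$ with $W(hg)=I$, so by linear independence on the weld $hg$ is a product of bulk $R_2^Z$ elements.'' Your $h$ is precisely $\prod_{h\in A_{\mathrm{weld}}} w_h \cdot \prod_{h\in A_{\mathrm{bulk}}} h$ and your residual $q$ is the paper's $hg$; the one small point to make explicit is that when you write $q=\prod_{h'\in B} h'$ you should choose $B$ to be an independent subset of $R_2^Z$ so that the linear-independence-on-the-weld hypothesis applies directly to $B_{\mathrm{weld}}$.
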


\begin{proof}
Trivially, all $X$-type operators are generated. If g is a $Z$-type operator of the welded code $S$ then we can find an operator $h$, generated by $R$ such that $\theta_1(hg)=I$ because of the well matched condition. $hg\in S_2$ and $W(hg)=I$ and by linear independence on the weld, $hg$ must be generated by elements of $R_2$ that are identity on the weld, which are also in $R$. $\Box$
\end{proof}

\begin{remark} 
\label{anticommutingOperator}
Suppose $R$ is a generating set which is a product of doing a $Z$ weld between standard CSS form, well matched, linearly independent on the weld generating sets $R_1$ and $R_2$. If $h_i \in R_i^Z$ and $t$ is an $X$-type operator that anticommutes with $h_i$ and commutes with all other operators in $R_i$ then t will only anticommute with a welded operators of the form $h\in R^Z$ where $\theta_i(h)=h_i$.
\end{remark}

Theorem \ref{weldedGenSet} says that we can weld together generating sets of stabilizer groups to produce a generating set for the welded code, which is important if we want to either efficiently perform the welding operation or visualize welding. Remark \ref{anticommutingOperator} says that if an operator anticommutes with a stabilizer generator before welding, it will anticommute with it after it is welded.  This is important because after we have a generating set for the welded code we will choose an operator from the generating set to promote to a logical operator and this remark tells us how to quickly identifying anticommuting logical operators.

\section{Surface Codes and Welding}

\subsection{Surface Codes}

To illustrate the process of welding, we generate the surface code \cite{dennis2002topological} by welding only groups of the form $G=\langle XX,ZZ\rangle$ together. Going back to the example of the previous section, we have two groups $S_1=\langle XX,ZZ \rangle$ and $S_2=\langle XX,ZZ\rangle$ and we identify qubit two of $S_1$ with qubit one of $S_2$ and we do a $Z$-type weld. After the embedding into 3 qubits, we have $S_1=\langle XXI,ZZI \rangle$ and $S_2=\langle IXX,IZZ\rangle$. We get the group $S_1 \boxplus S_2=\langle XXI,IXX,ZZZ \rangle$ where $ZZZ=(ZZI)(IZZ)W(ZZI)$.  This is shown diagrammatically in figure \ref{repcode}.

\begin{figure}[H]
\begin{tikzpicture}[scale=1]

\draw (1,1.2)node {$\alpha$};
\draw (2,1.2)node {$\alpha$};
\draw (5,1.2)node {$\alpha$};

\path(0,0) node[draw,shape=circle,scale=0.4,fill] (v0) {};
\path(1,0) node[draw,pink,shape=circle,scale=1] (v1) {};
\path(0,1) node[draw,shape=circle,scale=1] (v2) {};
\path(1,1) node[draw,shape=circle,scale=0.4,fill] (v3) {};

\draw[dashed,pink] (v0)--(v1)--(v3);
\draw[blue,thick] (v0)--(v2)--(v3);

\path(3,1) node[draw,shape=circle,scale=1] (v3) {};
\path(2,1) node[draw,shape=circle,scale=0.4,fill] (v2) {};
\path(3,0) node[draw,shape=circle,scale=0.4,fill] (v1) {};
\path(2,0) node[draw,pink,shape=circle,scale=1] (v0) {};

\draw[dashed,pink] (v2)--(v0)--(v1);
\draw[blue,thick] (v1)--(v3)--(v2);

\draw (1.5,.5)node {$\boxplus$};
\draw (3.5,.5)node {$=$};

\path(4,0) node[draw,shape=circle,scale=0.4,fill] (v0) {};
\path(5,0) node[draw,pink,shape=circle,scale=1] (v1) {};
\path(4,1) node[draw,shape=circle,scale=1] (v2) {};
\path(5,1) node[draw,shape=circle,scale=0.4,fill] (v3) {};
\path(6,1) node[draw,shape=circle,scale=1] (v4) {};
\path(6,0) node[draw,shape=circle,scale=0.4,fill] (v5) {};

\draw[dashed,pink] (v0)--(v1)--(v3) (v1)--(v5);
\draw[blue,thick] (v0)--(v2)--(v3)--(v4)--(v5);

\end{tikzpicture}
        
\caption{Welding two, two qubit repetition codes together.  Black dots are qubits and circles are stabilizers. Each stabilizer vertex acts on adjacent qubit vertices with a Pauli operator corresponding to the type of edge that connects the two. A solid edge indicates a Pauli-$X$ operator being applied and a dashed edge indicates a Pauli-$Z$ operator being applied. $\alpha$ indicates which qubits are being identified.}
\label{repcode}
\end{figure}
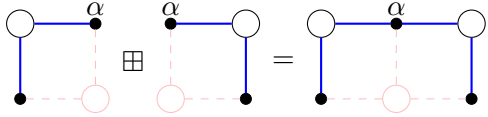

Continuing on diagrammatically, we weld two such codes together, via an $X$-type weld, as in figure \ref{smallestSurfaceCode}.

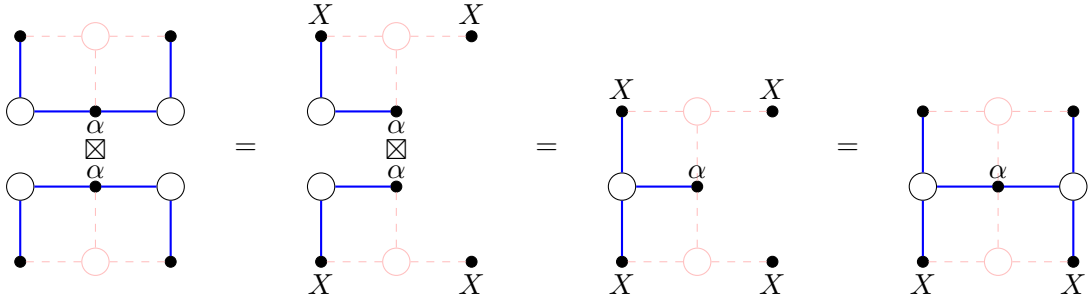
\begin{figure}[H]
\begin{tikzpicture}[scale=1]

\path(0,0) node[draw,shape=circle,scale=0.4,fill] (v00) {};
\path(1,0) node[draw,pink,shape=circle,scale=1] (v10) {};
\path(2,0) node[draw,shape=circle,scale=0.4,fill] (v20) {};
\path(0,1) node[draw,shape=circle,scale=1] (v01) {};
\path(1,1) node[draw,shape=circle,scale=0.4,fill] (v11) {};
\path(2,1) node[draw,shape=circle,scale=1] (v21) {};

\draw[dashed,pink] (v00)--(v10)--(v20) (v10)--(v11);
\draw[blue,thick] (v00)--(v01)--(v11)--(v21)--(v20);

\path(0,2) node[draw,shape=circle,scale=1] (v00) {};
\path(1,2) node[draw,shape=circle,scale=0.4,fill] (v10) {};
\path(2,2) node[draw,shape=circle,scale=1] (v20) {};
\path(0,3) node[draw,shape=circle,scale=0.4,fill] (v01) {};
\path(1,3) node[draw,pink,shape=circle,scale=1] (v11) {};
\path(2,3) node[draw,shape=circle,scale=0.4,fill] (v21) {};

\draw[dashed,pink] (v01)--(v11)--(v21) (v11)--(v10);
\draw[blue,thick] (v01)--(v00)--(v10)--(v20)--(v21);

\draw (1,1.2)node {$\alpha$};
\draw (1,1.8)node {$\alpha$};
\draw (1,1.5)node {$\boxtimes$};
\draw(3,1.5)node {$=$};

%making them well matched and linearly independent

\path(4,0) node[draw,shape=circle,scale=0.4,fill] (v00) {};
\path(5,0) node[draw,pink,shape=circle,scale=1] (v10) {};
\path(6,0) node[draw,shape=circle,scale=0.4,fill] (v20) {};
\path(4,1) node[draw,shape=circle,scale=1] (v01) {};
\path(5,1) node[draw,shape=circle,scale=0.4,fill] (v11) {};

\draw[dashed,pink] (v00)--(v10)--(v20) (v10)--(v11);
\draw[blue,thick] (v00)--(v01)--(v11);

\path(4,2) node[draw,shape=circle,scale=1] (v00) {};
\path(5,2) node[draw,shape=circle,scale=0.4,fill] (v10) {};
\path(4,3) node[draw,shape=circle,scale=0.4,fill] (v01) {};
\path(5,3) node[draw,pink,shape=circle,scale=1] (v11) {};
\path(6,3) node[draw,shape=circle,scale=0.4,fill] (v21) {};

\draw[dashed,pink] (v01)--(v11)--(v21) (v11)--(v10);
\draw[blue,thick] (v01)--(v00)--(v10);

\draw (5,1.2)node {$\alpha$};
\draw (5,1.8)node {$\alpha$};
\draw (5,1.5)node {$\boxtimes$};
\draw (4,-.3)node {$X$};
\draw (6,-.3)node {$X$};
\draw (4,3.3)node {$X$};
\draw (6,3.3)node {$X$};

\draw(7,1.5)node {$=$};

%combine them

\path(8,0) node[draw,shape=circle,scale=0.4,fill] (v00) {};
\path(9,0) node[draw,pink,shape=circle,scale=1] (v10) {};
\path(10,0) node[draw,shape=circle,scale=0.4,fill] (v20) {};
\path(8,1) node[draw,shape=circle,scale=1] (v01) {};
\path(9,1) node[draw,shape=circle,scale=0.4,fill] (v11) {};
\path(8,2) node[draw,shape=circle,scale=0.4,fill] (v02) {};
\path(9,2) node[draw,pink,shape=circle,scale=1] (v12) {};
\path(10,2) node[draw,shape=circle,scale=0.4,fill] (v22) {};

\draw[dashed,pink] (v00)--(v10)--(v20) (v10)--(v11) (v02)--(v12)--(v22) (v12)--(v11);
\draw[blue,thick] (v00)--(v01)--(v02) (v01)--(v11);

\draw (9,1.2)node {$\alpha$};
\draw (8,-.3)node {$X$};
\draw (10,-.3)node {$X$};
\draw (8,2.3)node {$X$};
\draw (10,2.3)node {$X$};
\draw(11,1.5)node {$=$};

%clean

\path(12,0) node[draw,shape=circle,scale=0.4,fill] (v00) {};
\path(13,0) node[draw,pink,shape=circle,scale=1] (v10) {};
\path(14,0) node[draw,shape=circle,scale=0.4,fill] (v20) {};
\path(12,1) node[draw,shape=circle,scale=1] (v01) {};
\path(13,1) node[draw,shape=circle,scale=0.4,fill] (v11) {};
\path(12,2) node[draw,shape=circle,scale=0.4,fill] (v02) {};
\path(13,2) node[draw,pink,shape=circle,scale=1] (v12) {};
\path(14,2) node[draw,shape=circle,scale=0.4,fill] (v22) {};
\path(14,1) node[draw,shape=circle,scale=1] (v21) {};

\draw[dashed,pink] (v00)--(v10)--(v20) (v10)--(v11) (v02)--(v12)--(v22) (v12)--(v11);
\draw[blue,thick] (v00)--(v01)--(v02) (v01)--(v11) (v20)--(v21)--(v22) (v21)--(v11);

\draw (13,1.2)node {$\alpha$};
\draw (12,-.3)node {$X$};
\draw (14,-.3)node {$X$};

\end{tikzpicture}

        
\caption{$\alpha$ identifies the two qubits to be welded. Requiring linear independence on the weld invokes a horizontal string of $X$ operators to be in the stabilizer group after the weld. After the second step we have specified two horizontal $X$-type string operators by placing an $X$ on the qubit vertex of the stabilizer.  The reason for doing this is so that it is in the same form as an arbitrary surface code which has much longer string like operators that can't conveniently be written as a stabilizer vertex.}
\label{smallestSurfaceCode}
\end{figure}

 After welding we have a 5 qubit surface code. But the string like operator, the horizontal $XX$ operator, is not yet three qubits long and so would not have protection from $Z$ errors if it were promoted to a logical operator.  In the next diagram we continue welding these tiny surface codes together, extending the length of the string like operators. We find that welding two 5 qubit surface codes creates a 7 qubit surface code.

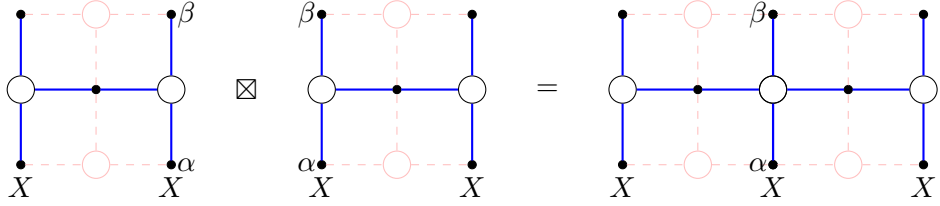
\begin{figure}[H]

\begin{tikzpicture}[scale=1] 
 \path(0,0) node[draw,shape=circle,scale=0.3,fill] (v00) {};
    \path(1,0) node[draw,shape=circle,pink,scale=1] (v10) {};
    \path(2,0) node[draw,shape=circle,scale=0.3,fill] (v20){}; 
    \path(0,1) node[draw,shape=circle,scale=1] (v01) {}; 
    \path(1,1) node[draw,shape=circle,scale=0.3,fill] (v11) {}; 
    \path(2,1) node[draw,shape=circle,scale=1] (v21) {}; 
    \path(0,2) node[draw,shape=circle,scale=0.3,fill] (v02) {}; 
    \path(1,2) node[draw,shape=circle,pink,scale=1] (v12) {}; 
    \path(2,2) node[draw,shape=circle,scale=0.3,fill] (v22) {}; 
    \draw[dashed,pink,scale=1] (v00)--(v10)--(v20) (v10)--(v11)--(v12) (v02)--(v12)--(v22); 
    \draw[blue,thick,scale=1] (v00)--(v01)--(v02) (v01)--(v11)--(v21) (v20)--(v21)--(v22); 
    
        \path(4,0) node[draw,shape=circle,scale=0.3,fill] (v00) {};
    \path(5,0) node[draw,shape=circle,pink,scale=1] (v10) {};
    \path(6,0) node[draw,shape=circle,scale=0.3,fill] (v20){}; 
    \path(4,1) node[draw,shape=circle,scale=1] (v01) {}; 
    \path(5,1) node[draw,shape=circle,scale=0.3,fill] (v11) {}; 
    \path(6,1) node[draw,shape=circle,scale=1] (v21) {}; 
    \path(4,2) node[draw,shape=circle,scale=0.3,fill] (v02) {}; 
    \path(5,2) node[draw,shape=circle,pink,scale=1] (v12) {}; 
    \path(6,2) node[draw,shape=circle,scale=0.3,fill] (v22) {}; 
    \draw[dashed,pink,scale=1] (v00)--(v10)--(v20) (v10)--(v11)--(v12) (v02)--(v12)--(v22); 
    \draw[blue,thick,scale=1] (v00)--(v01)--(v02) (v01)--(v11)--(v21) (v20)--(v21)--(v22); 
    
        \path(8,0) node[draw,shape=circle,scale=0.3,fill] (v00) {};
    \path(9,0) node[draw,shape=circle,pink,scale=1] (v10) {};
    \path(10,0) node[draw,shape=circle,scale=0.3,fill] (v20){}; 
    \path(8,1) node[draw,shape=circle,scale=1] (v01) {}; 
    \path(9,1) node[draw,shape=circle,scale=0.3,fill] (v11) {}; 
    \path(10,1) node[draw,shape=circle,scale=1] (v21) {}; 
    \path(8,2) node[draw,shape=circle,scale=0.3,fill] (v02) {}; 
    \path(9,2) node[draw,shape=circle,pink,scale=1] (v12) {}; 
    \path(10,2) node[draw,shape=circle,scale=0.3,fill] (v22) {}; 
    \draw[dashed,pink,scale=1] (v00)--(v10)--(v20) (v10)--(v11)--(v12) (v02)--(v12)--(v22); 
    \draw[blue,thick,scale=1] (v00)--(v01)--(v02) (v01)--(v11)--(v21) (v20)--(v21)--(v22); 
    
        \path(10,0) node[draw,shape=circle,scale=0.3,fill] (v00) {};
    \path(11,0) node[draw,shape=circle,pink,scale=1] (v10) {};
    \path(12,0) node[draw,shape=circle,scale=0.3,fill] (v20){}; 
    \path(10,1) node[draw,shape=circle,scale=1] (v01) {}; 
    \path(11,1) node[draw,shape=circle,scale=0.3,fill] (v11) {}; 
    \path(12,1) node[draw,shape=circle,scale=1] (v21) {}; 
    \path(10,2) node[draw,shape=circle,scale=0.3,fill] (v02) {}; 
    \path(11,2) node[draw,shape=circle,pink,scale=1] (v12) {}; 
    \path(12,2) node[draw,shape=circle,scale=0.3,fill] (v22) {}; 
    \draw[dashed,pink,scale=1] (v00)--(v10)--(v20) (v10)--(v11)--(v12) (v02)--(v12)--(v22); 
    \draw[blue,thick,scale=1] (v00)--(v01)--(v02) (v01)--(v11)--(v21) (v20)--(v21)--(v22);

\draw (3,1)node {$\boxtimes$};
\draw (7,1)node {$=$};
\draw (2.2,0)node {$\alpha$};
\draw (2.2,2)node {$\beta$};
\draw (3.8,0)node {$\alpha$};
\draw (3.8,2)node {$\beta$};
\draw (9.8,0)node {$\alpha$};
\draw (9.8,2)node {$\beta$};
\draw (0,-.3)node {$X$};
\draw (2,-.3)node {$X$};
\draw (4,-.3)node {$X$};
\draw (6,-.3)node {$X$};
\draw (8,-.3)node {$X$};
\draw (10,-.3)node {$X$};
\draw (12,-.3)node {$X$};

        \end{tikzpicture}

\caption{Welding two surface codes together on their smooth edge.  The strings of $X$s become welded together and promoted to a logical operator.}
\label{halfcorrecting}
\end{figure}

Using remark \ref{anticommutingOperator} we see that a vertical $ZZ$ string anti-commutes with the horizontal string of three $X$s in figure \ref{halfcorrecting}. Switching these in the stabilizer group generating set, represented by the graph, we can do a $Z$-type weld between two strips of surface code to arrive at a 13 qubit surface code that has string logical operators that protect against an arbitrary error on a single qubit as in figure \ref{fullcorrecting}.

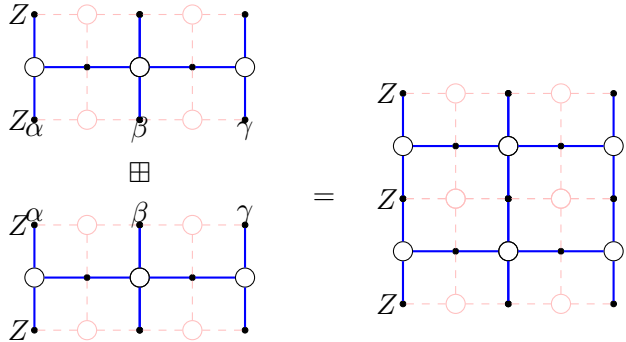
\begin{figure}[H]
\begin{tikzpicture}[scale=0.7] 
 \path(0,0) node[draw,shape=circle,scale=0.21,fill] (v00) {};
    \path(1,0) node[draw,shape=circle,pink,scale=0.7] (v10) {};
    \path(2,0) node[draw,shape=circle,scale=0.21,fill] (v20){}; 
    \path(0,1) node[draw,shape=circle,scale=0.7] (v01) {}; 
    \path(1,1) node[draw,shape=circle,scale=0.21,fill] (v11) {}; 
    \path(2,1) node[draw,shape=circle,scale=0.7] (v21) {}; 
    \path(0,2) node[draw,shape=circle,scale=0.21,fill] (v02) {}; 
    \path(1,2) node[draw,shape=circle,pink,scale=0.7] (v12) {}; 
    \path(2,2) node[draw,shape=circle,scale=0.21,fill] (v22) {}; 
    \draw[dashed,pink,scale=0.7] (v00)--(v10)--(v20) (v10)--(v11)--(v12) (v02)--(v12)--(v22); 
    \draw[blue,thick,scale=0.7] (v00)--(v01)--(v02) (v01)--(v11)--(v21) (v20)--(v21)--(v22); 
    
        \path(2,0) node[draw,shape=circle,scale=0.21,fill] (v00) {};
    \path(3,0) node[draw,shape=circle,pink,scale=0.7] (v10) {};
    \path(4,0) node[draw,shape=circle,scale=0.21,fill] (v20){}; 
    \path(2,1) node[draw,shape=circle,scale=0.7] (v01) {}; 
    \path(3,1) node[draw,shape=circle,scale=0.21,fill] (v11) {}; 
    \path(4,1) node[draw,shape=circle,scale=0.7] (v21) {}; 
    \path(2,2) node[draw,shape=circle,scale=0.21,fill] (v02) {}; 
    \path(3,2) node[draw,shape=circle,pink,scale=0.7] (v12) {}; 
    \path(4,2) node[draw,shape=circle,scale=0.21,fill] (v22) {}; 
    \draw[dashed,pink,scale=0.7] (v00)--(v10)--(v20) (v10)--(v11)--(v12) (v02)--(v12)--(v22); 
    \draw[blue,thick,scale=0.7] (v00)--(v01)--(v02) (v01)--(v11)--(v21) (v20)--(v21)--(v22); 
    
        \path(0,4) node[draw,shape=circle,scale=0.21,fill] (v00) {};
    \path(1,4) node[draw,shape=circle,pink,scale=0.7] (v10) {};
    \path(2,4) node[draw,shape=circle,scale=0.21,fill] (v20){}; 
    \path(0,5) node[draw,shape=circle,scale=0.7] (v01) {}; 
    \path(1,5) node[draw,shape=circle,scale=0.21,fill] (v11) {}; 
    \path(2,5) node[draw,shape=circle,scale=0.7] (v21) {}; 
    \path(0,6) node[draw,shape=circle,scale=0.21,fill] (v02) {}; 
    \path(1,6) node[draw,shape=circle,pink,scale=0.7] (v12) {}; 
    \path(2,6) node[draw,shape=circle,scale=0.21,fill] (v22) {}; 
    \draw[dashed,pink,scale=0.7] (v00)--(v10)--(v20) (v10)--(v11)--(v12) (v02)--(v12)--(v22); 
    \draw[blue,thick,scale=0.7] (v00)--(v01)--(v02) (v01)--(v11)--(v21) (v20)--(v21)--(v22); 
    
        \path(2,4) node[draw,shape=circle,scale=0.21,fill] (v00) {};
    \path(3,4) node[draw,shape=circle,pink,scale=0.7] (v10) {};
    \path(4,4) node[draw,shape=circle,scale=0.21,fill] (v20){}; 
    \path(2,5) node[draw,shape=circle,scale=0.7] (v01) {}; 
    \path(3,5) node[draw,shape=circle,scale=0.21,fill] (v11) {}; 
    \path(4,5) node[draw,shape=circle,scale=0.7] (v21) {}; 
    \path(2,6) node[draw,shape=circle,scale=0.21,fill] (v02) {}; 
    \path(3,6) node[draw,shape=circle,pink,scale=0.7] (v12) {}; 
    \path(4,6) node[draw,shape=circle,scale=0.21,fill] (v22) {}; 
    \draw[dashed,pink,scale=0.7] (v00)--(v10)--(v20) (v10)--(v11)--(v12) (v02)--(v12)--(v22); 
    \draw[blue,thick,scale=0.7] (v00)--(v01)--(v02) (v01)--(v11)--(v21) (v20)--(v21)--(v22); 
    
        \path(7,0.5) node[draw,shape=circle,scale=0.21,fill] (v00) {};
    \path(8,0.5) node[draw,shape=circle,pink,scale=0.7] (v10) {};
    \path(9,0.5) node[draw,shape=circle,scale=0.21,fill] (v20){}; 
    \path(7,1.5) node[draw,shape=circle,scale=0.7] (v01) {}; 
    \path(8,1.5) node[draw,shape=circle,scale=0.21,fill] (v11) {}; 
    \path(9,1.5) node[draw,shape=circle,scale=0.7] (v21) {}; 
    \path(7,2.5) node[draw,shape=circle,scale=0.21,fill] (v02) {}; 
    \path(8,2.5) node[draw,shape=circle,pink,scale=0.7] (v12) {}; 
    \path(9,2.5) node[draw,shape=circle,scale=0.21,fill] (v22) {}; 
    \draw[dashed,pink,scale=0.7] (v00)--(v10)--(v20) (v10)--(v11)--(v12) (v02)--(v12)--(v22); 
    \draw[blue,thick,scale=0.7] (v00)--(v01)--(v02) (v01)--(v11)--(v21) (v20)--(v21)--(v22); 
    
        \path(7,2.5) node[draw,shape=circle,scale=0.21,fill] (v00) {};
    \path(8,2.5) node[draw,shape=circle,pink,scale=0.7] (v10) {};
    \path(9,2.5) node[draw,shape=circle,scale=0.21,fill] (v20){}; 
    \path(7,3.5) node[draw,shape=circle,scale=0.7] (v01) {}; 
    \path(8,3.5) node[draw,shape=circle,scale=0.21,fill] (v11) {}; 
    \path(9,3.5) node[draw,shape=circle,scale=0.7] (v21) {}; 
    \path(7,4.5) node[draw,shape=circle,scale=0.21,fill] (v02) {}; 
    \path(8,4.5) node[draw,shape=circle,pink,scale=0.7] (v12) {}; 
    \path(9,4.5) node[draw,shape=circle,scale=0.21,fill] (v22) {}; 
    \draw[dashed,pink,scale=0.7] (v00)--(v10)--(v20) (v10)--(v11)--(v12) (v02)--(v12)--(v22); 
    \draw[blue,thick,scale=0.7] (v00)--(v01)--(v02) (v01)--(v11)--(v21) (v20)--(v21)--(v22); 
    
        \path(9,0.5) node[draw,shape=circle,scale=0.21,fill] (v00) {};
    \path(10,0.5) node[draw,shape=circle,pink,scale=0.7] (v10) {};
    \path(11,0.5) node[draw,shape=circle,scale=0.21,fill] (v20){}; 
    \path(9,1.5) node[draw,shape=circle,scale=0.7] (v01) {}; 
    \path(10,1.5) node[draw,shape=circle,scale=0.21,fill] (v11) {}; 
    \path(11,1.5) node[draw,shape=circle,scale=0.7] (v21) {}; 
    \path(9,2.5) node[draw,shape=circle,scale=0.21,fill] (v02) {}; 
    \path(10,2.5) node[draw,shape=circle,pink,scale=0.7] (v12) {}; 
    \path(11,2.5) node[draw,shape=circle,scale=0.21,fill] (v22) {}; 
    \draw[dashed,pink,scale=0.7] (v00)--(v10)--(v20) (v10)--(v11)--(v12) (v02)--(v12)--(v22); 
    \draw[blue,thick,scale=0.7] (v00)--(v01)--(v02) (v01)--(v11)--(v21) (v20)--(v21)--(v22); 
    
        \path(9,2.5) node[draw,shape=circle,scale=0.21,fill] (v00) {};
    \path(10,2.5) node[draw,shape=circle,pink,scale=0.7] (v10) {};
    \path(11,2.5) node[draw,shape=circle,scale=0.21,fill] (v20){}; 
    \path(9,3.5) node[draw,shape=circle,scale=0.7] (v01) {}; 
    \path(10,3.5) node[draw,shape=circle,scale=0.21,fill] (v11) {}; 
    \path(11,3.5) node[draw,shape=circle,scale=0.7] (v21) {}; 
    \path(9,4.5) node[draw,shape=circle,scale=0.21,fill] (v02) {}; 
    \path(10,4.5) node[draw,shape=circle,pink,scale=0.7] (v12) {}; 
    \path(11,4.5) node[draw,shape=circle,scale=0.21,fill] (v22) {}; 
    \draw[dashed,pink,scale=0.7] (v00)--(v10)--(v20) (v10)--(v11)--(v12) (v02)--(v12)--(v22); 
    \draw[blue,thick,scale=0.7] (v00)--(v01)--(v02) (v01)--(v11)--(v21) (v20)--(v21)--(v22);

\draw (2,3)node {$\boxplus$};
\draw (5.5,2.5)node {$=$};
\draw (0,2.2)node {$\alpha$};
\draw (2,2.2)node {$\beta$};
\draw (4,2.2)node {$\gamma$};

\draw (0,3.8)node {$\alpha$};
\draw (2,3.8)node {$\beta$};
\draw (4,3.8)node {$\gamma$};
\draw (-.3,0)node {$Z$};
\draw (-.3,2)node {$Z$};
\draw (-.3,4)node {$Z$};
\draw (-.3,6)node {$Z$};

\draw (6.7,.5)node {$Z$};
\draw (6.7,2.5)node {$Z$};
\draw (6.7,4.5)node {$Z$};

        \end{tikzpicture}
\label{fullcorrecting}
\caption{Welding two surface codes together on their rough edge also welds the logical $Z$ operators from each surface code together, creating a longer string of $Z$s.}
\end{figure}

By first welding a thin strip of surface code and then welding those strips together, we can generate a surface code of an arbitrary size.  

From these examples it should be clear that one finds the shape of the welded logical operators by first including them into the stabilizer group, welding them and then promoting them to logical operators again.  In doing so we are always welding codes with zero encoded qubits to generate a code with zero encoded qubits. For each independent stabilizer generator that we promote to a logical operator, we get an additional encoded qubit. If we promote the string operators of the surface code to logical operators then we have a code that encodes a single logical qubit. 

It should be noted that the top and bottom edges of the surface code in figure \ref{fullcorrecting} are called rough edges and that the left and right edges are called smooth edges. Rough edges are where a single $X$-type quasi-particles can be created from a single $Z$ error and smooth edges are where a single $Z$-type quasi-particle can be created from a single $X$ error.

\subsection{Welded Surface Codes}

%-Discuss how to create an energy barrier.
%-Discuss example of three surfaces
%-Explicitly include which logical operators we include before welding.
%-Discuss X and Z regions and energy barrier calculations.

In this section we specialize to welding boundaries of surface codes. Specifically, smooth edges to smooth edges with $X$-type welds and rough edges to rough edges with $Z$-type welds. We show examples of how to create a high energy barrier and then show how regions where quasi-particles move around freely are used to put a lower bound on the energy barrier.

In the previous section we showed how to weld two surface codes to produce a larger surface code, however, the larger surface code still has string logical operators. Let's consider what happens if instead we welded three surface codes together along a rough edge with a $Z$-type weld as in figure \ref{roughWeldOn3Surfaces}.   More precisely, we first weld the first two surface codes together and then weld the third surface code to the product. It will be useful to describe the rough and smooth boundaries of a given surface code after it has been welded.

\begin{definition}
A {\em rough weld} describes the set of qubits along the rough edge of a surface code after it has been welded on that rough edge via a $Z$-type weld. A {\em smooth weld} is defined similarly where it describes the set of identified qubits of an $X$-type weld on a smooth edge of a surface code.
\end{definition}

\begin{figure}[H]
\input {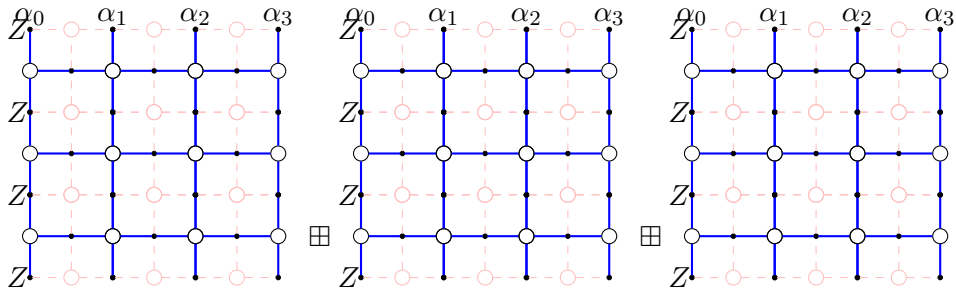}
\caption{Three surface codes welded together by their rough edge via a $Z$-type weld. $\alpha_i$ label the identified qubits that will be welded together.}
\label{roughWeldOn3Surfaces}
\end{figure}

Before the weld we include, into the stabilizer group, the logical operator composed of a string of $Z$ operators extending between opposite rough edges.  We are then welding three codes with zero logical qubits together to produce a code with zero logical qubits.  One of the generators of the welded code will be the three strings of $Z$ operators joined on the line of $\alpha$s, which we'll call $\bar{Z}$.  If we promote $\bar{Z}$ to a logical operator, we will then have a code that encodes a single qubit and $\bar{Z}$ has an energy barrier of 2. 

Notice that on a particular surface, if there is an odd number of $Z$ errors on the rough boundaries, i.e. rough welds and rough edges, of that surface, then there will be at least one quasi-particle in that surface. Also notice that $\bar{Z}$ has only one Pauli-$Z$ operator on each rough boundary of each surface and that stabilizers of our welded code act with even numbers of Pauli $Z$s on the rough boundaries so that any representation of $\bar{Z}$ will have an odd number of Pauli $Z$s on their rough boundaries.  For the example in figure \ref{roughWeldOn3Surfaces}, at least one Pauli Z has to be applied to the rough weld, hence there is no way to produce $\bar{Z}$, via a Pauli-$Z$ sequence, without creating at least two quasi-particles. This bound can be saturated by considering an $X$-type quasi-particle being created at a rough edge and traveling through the rough weld creating an additional $X$-type quasi-particle. Both quasi-particles then annihilate at their respective rough edges. 

What about $Z$-type quasi-particles?  Applying an $X$ error on one of the smooth edges creates a single $Z$-type quasi-particle but everywhere else $Z$-type quasi-particles get created in pairs.  This is an important observation because it means that $Z$-type quasi-particles can pass through $Z$-type welds on rough edges without creating new quasi-particles.

Similarly we can do $X$-type welds on smooth edges of surface codes as in figure \ref{smoothWeldOn3Surfaces} in order to increase the energy barrier for the $\bar{Z}$ operator of the encoded qubit.  Actually, we could first weld three surface codes via a rough edge as in figure \ref{roughWeldOn3Surfaces}, and then weld three such codes together via an $X$-type weld on a common smooth boundary in order to create a code that has an energy barrier of 2 for both $\bar{X}$ and $\bar{Z}$.  

This simple example illustrates the principle behind how to generate energy barriers using welding. We make boundaries where errors create three or more quasi-particles and then those quasi-particles must travel through some region to annihilate at a different boundary.

\begin{figure}[H]
\input {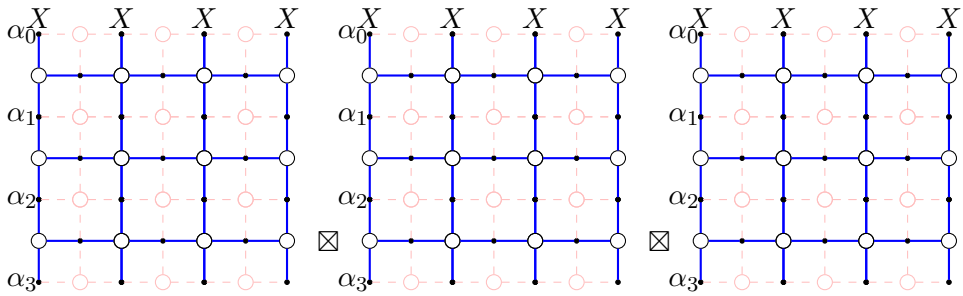}
\caption{Three surface codes welded together on a smooth edge with an $X$-type weld.}
\label{smoothWeldOn3Surfaces}
\end{figure}

Some key facts to remember about $Z$-type welds on rough edges and $X$-type welds on smooth edges are as follows:

\begin{itemize}
\item An $X$-type quasi-particle splits in going past a rough weld.
\item $Z$-type quasi-particles pass through rough welds without splitting.
\item A $Z$-type quasi-particle splits in going past a smooth weld.
\item $X$-type quasi-particles pass through smooth welds without splitting.
\end{itemize}

These four rules allow us to find regions where quasi-particles can move around without creating new quasi-particles or annihilating. 

\begin{definition}[Flat regions:] Regions of qubits where $X$ and $Z$-type quasi-particles can move around without creating new quasi-particles will be called {\em flat-$X$ regions and flat-$Z$ regions} respectively.
\end{definition}

If we only did $Z$-type welds between rough edges of surface codes then a flat-$X$ region of our code, call it $A$, will have rough boundaries, i.e. a set of rough welds, for which a $Z$ error on a rough weld will create a single $X$-type quasi-particle in region $A$. If there are an odd number of $Z$ errors on all the rough boundaries of A then there will be at least one $X$-type quasi-particle in $A$ for otherwise an even number of quasi-particles could annihilate. In general we can split our code into several flat-$X$ regions, call them $A_i$, that will be connected via rough welds where a Pauli-$Z$ error on a rough weld creates a single excitation in each adjacent region $A_i$.  If a particular representation of a logical $Z$ operator has an odd number of single qubit $Z$ operators on its rough welds, then throughout a Pauli sequence generating $\bar{Z}$, for each moment when $A_i$ has an odd number of $Z$ errors on its rough welds there will be at least one quasi-particle in $A_i$.  We can encode this information via a graph where each flat-$X$ region is a vertex that connects to a "rough" vertex via an edge.  A similar graph can be formed for flat-$Z$ regions and smooth boundaries.

\begin{figure}[h!]
   \includegraphics[width=70mm]{./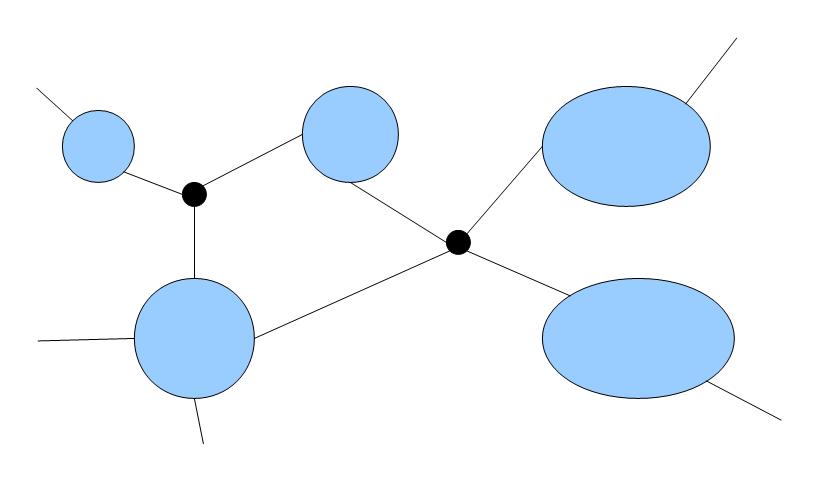}
\caption{An abstract diagram of flat-$X$ regions, represented as large nodes, connected by rough boundaries, represented as dots in black. A $Z$ error on a rough boundary creates an $X$-type quasi-particle in each connected flat-$X$ region.}
\end{figure}

\begin{lemma}[Parity lower bound]
\label{parityLowerBound}
\label{XZregions}
For a code that is the result of welding surface codes together via $X$-type welds on smooth boundaries and $Z$-type welds on rough boundaries, let $\{A_i\}$ be the set of flat $X$ regions of the code and $p_i$ be the parity of the number of $Z$ errors on the rough boundaries of $A_i$.  For a Pauli-$Z$ walk, the total number of quasi-particles $|F_k|\geq \sum_i p_i$ where $|F_k|$ is the total number of $X$-type quasi-particles at step $k$ of the Pauli-$Z$ walk. A similar statement holds for $Z$-type quasi-particles and flat-$Z$ regions.
\end{lemma}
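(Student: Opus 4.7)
The plan is to establish, for each flat-$X$ region $A_i$ individually, that the parity of $X$-type quasi-particles inside $A_i$ at any step $k$ of the Pauli-$Z$ walk equals $p_i$, and then conclude by summing. This will work because flat-$X$ regions are disjoint (they partition the $X$-type generators of the code), so a lower bound region-by-region sums to a global lower bound.

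First I would examine how a single $Z$ error on a qubit $q$ changes the count of $X$-type quasi-particles inside a region $A_i$. This change equals the number of $X$-type generators of $A_i$ that act nontrivially on $q$, taken modulo $2$. I would split into cases. If $q$ lies in the interior of $A_i$, then by the defining property of a flat-$X$ region this number is even, so the parity inside $A_i$ is unaffected. If $q$ lies on a rough boundary of $A_i$ (a rough edge or rough weld of the underlying surface code), then by the quasi-particle splitting rules listed just before the lemma, exactly one new $X$-type quasi-particle appears inside $A_i$, so the parity flips. If $q$ lies outside $A_i$ entirely (inside a different flat-$X$ region or on a smooth weld/edge not attached to $A_i$), then no $X$-type stabilizer of $A_i$ touches $q$ and the parity inside $A_i$ is unchanged.

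With these three cases in hand, I would proceed by induction on the step $k$ of the Pauli-$Z$ walk. At $k=0$ there are no quasi-particles and no applied errors, so the invariant ``parity of $X$-type quasi-particles in $A_i$ equals the parity of $Z$ errors applied so far on rough boundaries of $A_i$'' holds trivially. The inductive step uses exactly the case analysis above: only errors on rough boundaries of $A_i$ flip the parity inside $A_i$, and each such error flips it once. Hence at step $k$ the parity inside $A_i$ equals $p_i$, the current parity of $Z$ errors on the rough boundaries of $A_i$. Consequently, whenever $p_i = 1$ there must be at least one $X$-type quasi-particle in $A_i$, and summing over the disjoint regions $\{A_i\}$ gives $|F_k| \geq \sum_i p_i$. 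The argument for $Z$-type quasi-particles and flat-$Z$ regions is symmetric.

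The main technical obstacle is the middle case: verifying that a $Z$ error on a rough weld truly contributes to exactly one region on each side, i.e.\ that the welded $X$-type stabilizer at the boundary decomposes cleanly across adjacent flat-$X$ regions with odd overlap on $q$. This is really just an unpacking of the welding construction from Theorem~\ref{weldedGenSet} and Remark~\ref{anticommutingOperator}: the welded $X$-generator straddling the rough weld acts with a single Pauli-$X$ on the identified qubit, and each of its pre-weld pieces lives in a distinct flat-$X$ region. Once this is nailed down, everything else is bookkeeping modulo $2$.
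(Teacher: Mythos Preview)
Your proof is correct and takes the same approach as the paper's: the parity of $X$-type quasi-particles in each flat-$X$ region equals $p_i$, so each region with $p_i=1$ contributes at least one quasi-particle, and summing over the disjoint regions gives the bound. One small terminological correction to your ``technical obstacle'': at a rough weld (a $Z$-type weld) the $X$-type stabilizers are adopted unchanged from each piece rather than welded, so there is no ``welded $X$-generator straddling the rough weld'' to decompose---each adjacent flat-$X$ region simply has its own $X$-type generator touching the shared qubit, which directly yields the one-quasi-particle-per-adjacent-region behavior you need.
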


  Notice that the sum is not taken modulo 2, that is, if three regions have parity 1 then the sum will be at least 3. The parity lower bound lemma tells us how to put a lower bound on the energy barrier of welded codes and in fact, this bound is saturated if pairs of quasi-particles in the same regions annihilate as soon as they are created in those regions.

\begin{proof}[Proof of the parity lower bound:]
An $X$-type quasi-particle can travel through $X$-type welds on smooth edges without creating new quasi-particles and so if an even number of $X$-type quasi-particles exists within a given flat $X$ region, they can annihilate whereas if an odd number of errors occur on the rough boundary of a flat X region, an odd number of $X$-type quasi-particles must exist within that flat $X$ region with a minimum of one $Z$-type quasi-particle existing. Summing over all such regions we find that the total number of quasi-particles satisfies $|F_k|\geq \sum_i p_i$ where $p_i$ is the parity of $Z$ errors on the boundaries the $i$th flat-$X$ region. The proof of the parity lower bound for flat-$Z$ regions follows similarly. $\Box$
\end{proof}

\section{Solid Codes and Welding}
\subsection{Solid Codes}

In this section we introduce the solid code, a homology code \cite{bombin2009quantum}, which is the 3-d analog of surface codes, i.e. a 3-d toric code \cite{castelnovo2007entanglement} with rough and smooth surfaces, and prove a lower bound on the energy barrier of the solid code by analyzing the parity of quasi-particles in its flat regions.

Define the graph of a solid code by a cubic lattice.  Next we remove all horizontal edges at the top and bottom boundary of the cube.  Qubits live on edges. The the vertices of the cubic lattice are $X$-type stabilizers where for each vertex the stabilizer is the product of Pauli-X operators for each edge connected to that vertex and identity elsewhere. We do not include the vertices at the top and bottom rough boundaries as stabilizers.  For each plaquette of this graph include a $Z$-type stabilizer that acts on the qubits of that face.  On the roughened boundary we have half plaquettes, like for the surface code, where the $Z$-type stabilizer is the product of three Pauli-$Z$ operators. A small version of a solid code is shown in figure \ref{smallSolid}.  

\begin{figure}[H]
\includegraphics[width=140mm]{./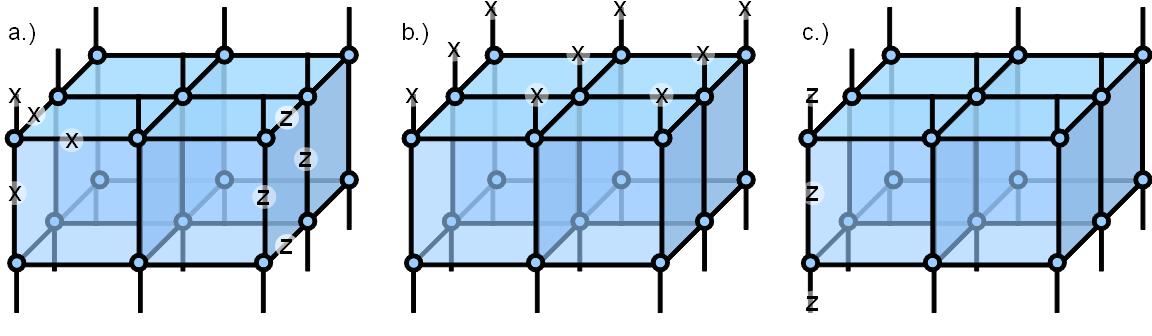}
\caption{The graph of a small solid code with qubits on the edges. a.)  An $X$-star operator and a $Z$-plaquette operator are shown. b.) The $\bar{X}$-membrane operator is shown. c.)The $\bar{Z}$-string operator is shown. }
\label{smallSolid}
\end{figure}

\begin{theorem}
There is one encoded qubit for the solid code with logical operators that are a membrane of Pauli-$X$ operators and a string of Pauli-$Z$ operators.
\end{theorem}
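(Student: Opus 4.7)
The plan is to (a) exhibit an $\bar{X}$--$\bar{Z}$ pair of the claimed membrane/string shape that commutes with every stabilizer and anticommutes with each other, establishing $k\ge 1$, and then (b) show by a chain-complex count that $k\le 1$.

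For (a), take $\bar{X}$ to be the product of Pauli-$X$ over all vertical ($z$-direction) edges lying in one fixed horizontal slab between layers $k$ and $k+1$, and take $\bar{Z}$ to be the product of Pauli-$Z$ along one column of $z$-edges running from the bottom rough boundary up to the top rough boundary. The commutation checks are local and mechanical: $\bar{X}$ commutes with every $X$-star by CSS form, and with every $Z$-plaquette because horizontal plaquettes contain no vertical edge while every vertical plaquette (full or half) contains either zero or exactly two vertical edges of the chosen slab. Symmetrically, $\bar{Z}$ commutes with every $Z$-plaquette, and with every $X$-star because exactly two column-edges meet each interior vertex of the column, while the two column endpoints sit on the rough boundaries whose vertex stars were deliberately excluded from the Hamiltonian. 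Finally $\bar{X}$ and $\bar{Z}$ share exactly one edge (the column's edge lying inside the slab), so they anticommute, which already shows the code encodes at least one qubit with the claimed logical-operator shapes.

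For (b), I view the code via its CSS chain complex $C_3 \xrightarrow{\partial_3} C_2 \xrightarrow{\partial_2} C_1 \xrightarrow{\partial_1} C_0$, where $C_0$ consists only of interior vertices (reflecting the absence of rough-boundary stars), $C_1$ is the set of edges/qubits, $C_2$ contains both full plaquettes and the rough-boundary half-plaquettes, and $C_3$ is the set of unit cubes. The number of encoded qubits equals $\dim_{\mathbb{F}_2} \ker\partial_1 / \mathrm{im}\,\partial_2 = |C_1| - \mathrm{rank}\,\partial_1 - \mathrm{rank}\,\partial_2$. I then need two linear-algebra facts: (i) $\mathrm{rank}\,\partial_1 = |C_0|$, i.e.\ all $X$-stars are $\mathbb{F}_2$-independent; and (ii) $\mathrm{rank}\,\partial_2 = |C_2| - |C_3|$, i.e.\ the plaquette relations are exactly generated by the $|C_3|$ unit-cube relations. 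Substituting into $k = |C_1| - |C_0| - |C_2| + |C_3|$ and plugging in the counts for the $L\times L\times L$ lattice collapses to $k=1$, matching the $\bar{X},\bar{Z}$ pair from (a).

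The main obstacle is (b)(ii): showing that the cube relations are themselves $\mathbb{F}_2$-independent and span all plaquette relations. Independence follows from a leading-term argument — order the cubes lexicographically and observe that each cube's relation contains at least one face (for instance its top horizontal plaquette, or for rough-adjacent cubes a distinguished half-plaquette) that appears in no earlier cube's relation. The spanning direction is the statement $H_2=0$ for this cellular model of a 3-ball with two distinguished rough faces, which can be verified combinatorially by a contracting-homotopy argument on the cubical complex after collapsing the rough faces. Fact (i) is comparatively easier: any nonempty $\mathbb{F}_2$-combination of stars touching the rough boundary leaves an odd number of rough-adjacent vertical edges uncovered, and a combination confined to the interior would, by connectedness of the interior-vertex graph, have to be the full interior — but that support does touch the rough boundary, a contradiction.
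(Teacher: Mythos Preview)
Your argument is correct, but it takes a route genuinely different from the paper's. The paper proves this theorem \emph{using the welding machinery itself}: it builds the solid code by $X$-welding a square grid of tall, one-qubit-wide surface-code strips (each carrying a horizontal $XX$ string in its generating set). Because those strips are well matched and linearly independent on the weld, Theorem~\ref{weldedGenSet} guarantees the welded object still encodes zero qubits; the membrane $\bar X$ is then simply the welded image of all those $XX$ strings, and promoting it to a logical operator yields exactly one encoded qubit, with $\bar Z$ identified via Remark~\ref{anticommutingOperator}. The paper also notes that the horizontal plaquettes, though absent from the welded generating set, lie in the stabilizer group anyway.

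Your approach bypasses welding entirely in favor of a direct chain-complex/Euler-characteristic count. That is perfectly valid --- and indeed the formula $k=|C_1|-|C_0|-|C_2|+|C_3|$ does collapse to $1$ once you include horizontal plaquettes in $C_2$ and all $(a{-}1)(b{-}1)(c{-}1)$ cube relations in $C_3$. What each buys: the paper's argument is thematically on-message (it exhibits the solid code as a welded object, which is exactly what the later energy-barrier analysis needs) and delivers the membrane and string shapes automatically as welded images. Your argument is more self-contained and connects the result to the standard homology-code picture ($H_1(X,R)\cong\tilde H_0(R)\cong\mathbb F_2$ for a $3$-ball with two rough faces).

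Two small wording issues in your sketch: in (b)(i), ``odd number of rough-adjacent vertical edges'' is not the right invariant (two stars at the rough-adjacent layer already give an even count); the clean statement is that each rough-adjacent vertical edge lies in exactly one star, so any nontrivial relation must avoid the entire rough-adjacent layer, and then a one-step vertical induction kills the rest. And in (b)(ii), the $H_2=0$ step is only gestured at; it is true (the relative complex models a $3$-ball with two disjoint disks as rough boundary), but you should either cite the relative long exact sequence or carry out the collapsing argument explicitly.
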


\begin{proof}
 We'll prove this by constructing the solid code by welding surface codes together and then promoting a single stabilizer to a logical operator to arrive at the solid code. To start, consider a tall thin surface code as in figure \ref{tallThinSurfaceCode} with a horizontal-$XX$ operator in the generating set. This strip can be as tall as we wish. Label the left and right smooth edge with $i$ and $j$ and label the surface code with $(i,j)$.  Now do an $X$-type weld on these smooth edges such that the smooth edges and surface codes form a graph $G=(V,E)$ of a square grid where $V=\{i\}$ and $E=\{(i,j)\}$.  The tall thin surface code with a horizontal-$XX$ string operator encodes zero qubits because the tall thin surface codes are well matched and linearly independent on the weld so that welding them together in a square grid also produces a code with zero qubits by theorem \ref{weldedGenSet}.  The horizontal-$XX$ operators get welded together to form a membrane of Pauli-$X$ operators, which we call $\bar{X}$. Promoting $\bar{X}$ to a logical operator gives a code with a single encoded qubit. 

The logical operator that anitcommutes with $\bar{X}$, that is $\bar{Z}$, is the string of $Z$s going between the top and bottom of each surface code. We can see this as a consequence of remark \ref{anticommutingOperator}. 

The given code does not have horizontal plaquette operators but notice that the horizontal $Z$-plaquette operators are in the stabilizer group because we can multiply four half plaquettes at a rough boundary to get a plaquette operator. We can then multiply this horizontal plaquette by four adjacent vertical plaquettes to move this horizontal plaquette up or down.  $\Box$
\end{proof}

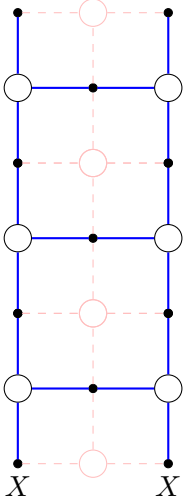
\begin{figure}[H]
\begin{tikzpicture}[scale=1] 
 \path(0,0) node[draw,shape=circle,scale=0.3,fill] (v00) {};
    \path(1,0) node[draw,shape=circle,pink,scale=1] (v10) {};
    \path(2,0) node[draw,shape=circle,scale=0.3,fill] (v20){}; 
    \path(0,1) node[draw,shape=circle,scale=1] (v01) {}; 
    \path(1,1) node[draw,shape=circle,scale=0.3,fill] (v11) {}; 
    \path(2,1) node[draw,shape=circle,scale=1] (v21) {}; 
    \path(0,2) node[draw,shape=circle,scale=0.3,fill] (v02) {}; 
    \path(1,2) node[draw,shape=circle,pink,scale=1] (v12) {}; 
    \path(2,2) node[draw,shape=circle,scale=0.3,fill] (v22) {}; 
    \draw[dashed,pink,scale=1] (v00)--(v10)--(v20) (v10)--(v11)--(v12) (v02)--(v12)--(v22); 
    \draw[blue,thick,scale=1] (v00)--(v01)--(v02) (v01)--(v11)--(v21) (v20)--(v21)--(v22); 
    
        \path(0,2) node[draw,shape=circle,scale=0.3,fill] (v00) {};
    \path(1,2) node[draw,shape=circle,pink,scale=1] (v10) {};
    \path(2,2) node[draw,shape=circle,scale=0.3,fill] (v20){}; 
    \path(0,3) node[draw,shape=circle,scale=1] (v01) {}; 
    \path(1,3) node[draw,shape=circle,scale=0.3,fill] (v11) {}; 
    \path(2,3) node[draw,shape=circle,scale=1] (v21) {}; 
    \path(0,4) node[draw,shape=circle,scale=0.3,fill] (v02) {}; 
    \path(1,4) node[draw,shape=circle,pink,scale=1] (v12) {}; 
    \path(2,4) node[draw,shape=circle,scale=0.3,fill] (v22) {}; 
    \draw[dashed,pink,scale=1] (v00)--(v10)--(v20) (v10)--(v11)--(v12) (v02)--(v12)--(v22); 
    \draw[blue,thick,scale=1] (v00)--(v01)--(v02) (v01)--(v11)--(v21) (v20)--(v21)--(v22); 
    
        \path(0,4) node[draw,shape=circle,scale=0.3,fill] (v00) {};
    \path(1,4) node[draw,shape=circle,pink,scale=1] (v10) {};
    \path(2,4) node[draw,shape=circle,scale=0.3,fill] (v20){}; 
    \path(0,5) node[draw,shape=circle,scale=1] (v01) {}; 
    \path(1,5) node[draw,shape=circle,scale=0.3,fill] (v11) {}; 
    \path(2,5) node[draw,shape=circle,scale=1] (v21) {}; 
    \path(0,6) node[draw,shape=circle,scale=0.3,fill] (v02) {}; 
    \path(1,6) node[draw,shape=circle,pink,scale=1] (v12) {}; 
    \path(2,6) node[draw,shape=circle,scale=0.3,fill] (v22) {}; 
    \draw[dashed,pink,scale=1] (v00)--(v10)--(v20) (v10)--(v11)--(v12) (v02)--(v12)--(v22); 
    \draw[blue,thick,scale=1] (v00)--(v01)--(v02) (v01)--(v11)--(v21) (v20)--(v21)--(v22);

\draw (0,-.3)node {$X$};
\draw (2,-.3)node {$X$};

        \end{tikzpicture}
\caption{An example of a tall thin surface code.}
\label{tallThinSurfaceCode}
\end{figure}

Next we will find the energy barrier of $\bar{X}$ and $\bar{Z}$. First notice that we can produce $\bar{Z}$ via a Pauli sequence with a constant energy barrier because $X$-type quasi-particles can move without creating new quasi-particles when away from the rough boundaries, that is, $\bar{Z}$ is a string operator. 

In order to put a lower bound on the energy barrier of the $\bar{X}$-membrane operator we'll remove the horizontal-$Z$-plaquette operators from the generating set as in the construction via welding. Each tall thin surface $(i,j)\in E$ then becomes a flat-$Z$ region where $Z$-type quasi-particles can move around without creating new $Z$-type quasi-particles.  We can use the parity lower bound lemma to lower bound the energy barrier. The flat-$Z$ regions and smooth boundaries are shown graphically in figure \ref{SolidZtypeRegions}. The lower bound is given by the 2-d Ising model which means that:

\begin{lemma}
\label{solidCodeEnergyBarrier}
For a solid code that is $O(d)$ qubits wide, the energy barrier for $\bar{X}$  is $O(d)$.  
\end{lemma}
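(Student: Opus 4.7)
The strategy is to apply the parity lower bound (Lemma~\ref{parityLowerBound}) to the welding-construction generating set (which omits the horizontal $Z$-plaquettes) and thereby reduce the $\bar{X}$ energy-barrier question to the classical energy-barrier question for the ferromagnetic 2-d Ising model on the $d\times d$ square grid $G$ underlying the welding. Since removing generators can only decrease the energy barrier, any lower bound in the reduced set is a lower bound for the original solid code; in the welding generating set, every tall thin surface strip, labeled by an edge $(i,j)\in E(G)$, is a flat-$Z$ region whose two smooth boundaries are the welds at vertices $i$ and $j$.

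For any Pauli-$X$ walk I would introduce the Ising projection. Let $s_i(k)\in\{0,1\}$ be the parity of $X$ errors applied to the smooth weld at vertex $i$ through step $k$. Distinct smooth welds are qubit-disjoint, so a single $X$ error on a weld qubit toggles exactly one $s_i$, while $X$ errors off the welds do not affect $s$. The parity of $X$ errors on the smooth boundary of region $(i,j)$ is then $p_{(i,j)}(k)=s_i(k)\oplus s_j(k)$, and Lemma~\ref{parityLowerBound} gives
\begin{equation*}
|F_k| \;\ge\; \sum_{(i,j)\in E} \bigl(s_i(k)\oplus s_j(k)\bigr),
\end{equation*}
which is exactly the bond energy of the Ising configuration $s(k)$ on $G$.

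Next I would show that the final configuration $s(N)$ produced by any walk realizing $\bar{X}$ is non-trivial, in the sense that the set $\{i : s_i(N)=1\}$ separates some pair of opposite smooth boundaries of $G$. Writing $\bar{X}$ as the horizontal slab-membrane through a single height gives $s(N)\equiv 1$; other representations differ by products of $X$-star stabilizers, whose action on $s$ amounts to adding the $\mathbb{Z}_2$-coboundary of a $0$-chain on $G$, so the relative class of $s(N)$ is well defined. Because $\bar{X}$ anticommutes with the $\bar{Z}$-string running between opposite rough boundaries within a single column, this relative class must be non-trivial. The main step is then the classical isoperimetric/min-cut estimate on the $d\times d$ grid: any sequence of single-spin flips moving $s\equiv 0$ to any configuration in this non-trivial class must pass through a configuration whose domain wall has length at least $d$, since at the first moment the flipped region separates opposite smooth boundaries its boundary contains a cut of size $\ge d$. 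Combined with the displayed bound this yields $\max_k |F_k|=\Omega(d)$.

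For the matching upper bound I would exhibit an explicit Pauli-$X$ sequence that builds the $\bar{X}$ membrane one row of tall thin strips at a time, so the flipped region of $G$ is always a rectangle of perimeter $O(d)$ and the quasi-particle count stays $O(d)$ throughout. The hard step is the third paragraph: one must verify carefully that the $\bar{X}$ equivalence class, after projection onto $s$, really does produce a homologically non-trivial configuration on $G$ rather than one that can be collapsed to the trivial configuration by stars, and then identify the correct Ising min-cut that forces the $\Omega(d)$ energy.
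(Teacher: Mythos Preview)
Your proposal is correct and follows the same route as the paper: apply Lemma~\ref{parityLowerBound} to the flat-$Z$ regions of the welding generating set (horizontal plaquettes removed), reduce the $\bar{X}$ energy barrier to the domain-wall energy barrier of the 2-d Ising model on the grid $G$, and pair the resulting $\Omega(d)$ lower bound with an explicit $O(d)$ row-by-row construction of the membrane. The paper's own argument is much terser---it simply asserts that the flat-$Z$ regions ``are in the shape of a 2-d Ising model'' and that this gives $O(d)$---so you are supplying precisely the details it omits; one small refinement worth noting is that every $X$-star stabilizer touches each smooth-weld column in an even number of qubits (the two vertical edges above and below the star vertex), so $s(N)$ is in fact exactly invariant under the choice of representative of $\bar{X}$ rather than merely invariant up to a coboundary, which makes your ``hard step'' easier than you anticipated.
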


\begin{proof}
We use the parity lower bound on the flat-X regions of the solid code with horizontal plaquettes removed. The flat-$X$ regions are in the shape of a 2-d ising model. This gives an energy barrier of $O(d)$. This bound can be saturated if pairs of $Z$ type quasi-particles are annihilated as soon as they are created in each flat-$Z$ region. Adding horizontal plaquettes, as in the 3-d toric code\footnote{This same proof technique, via flat regions, applies equally to the 3-d toric code \cite{castelnovo2007entanglement} to provide a lower bound for the energy barrier of $O(d)$ for a 3-d toric code with $O(d^3)$ qubits.}., could only increase the energy and even then, keeping the membrane operator completely horizontal, the bound $O(d)$ can be saturated. $\Box$
\end{proof}

\begin{figure}[H]
\includegraphics[width=50mm]{./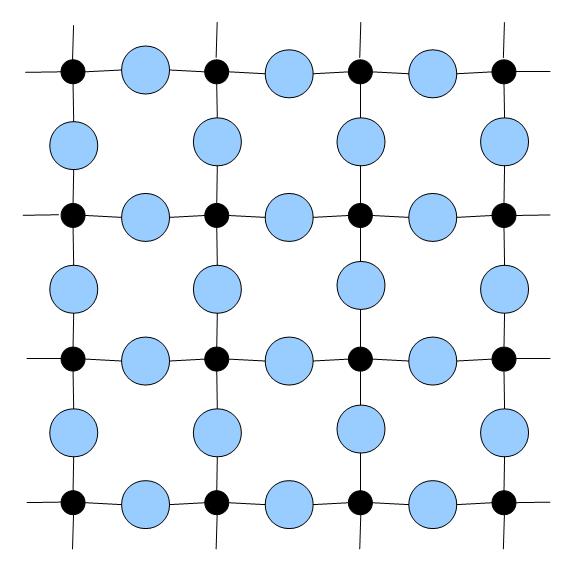}
\caption{The blue regions are flat-$Z$ regions and black vertices are smooth edges. The energy barrier is bounded below by the energy barrier of the Ising model corresponding to the same graph as this figure with the reinterpretation that the black dots are qubits and the blue circles are $ZZ$ operators in the hamiltionian $H=\sum_{(i,j)\in E} -Z_i Z_j$.}
\label{SolidZtypeRegions}
\end{figure}

\subsection{Welded Solid Code}

In this section the basic objects that we'll be welding together are the solid codes with generating sets that do not include the horizontal-$Z$ plaquettes, as they could only increase the energy barrier. As a first example we'll weld three solid codes together by their rough boundaries, discuss the generalization of this and then analyze the energy barrier of the resulting code.

As a simple example consider welding the three solid codes in figure \ref{threeSmallWeldedSolids} by their rough boundaries with a $Z$-type weld.  After welding, the $\bar{Z}$ operator becomes three strings of $Z$s welded together at the rough boundary. $\bar{Z}$ will have a minimum energy barrier of 2 for the same reason that three surface codes welded on a rough edge have an energy barrier of 2. An $X$-type quasi-particle created at a rough boundary of one of the solids must annihilate on a different rough boundary and so must pass through the rough weld and create an additional $X$-type quasi-particle. 

\begin{figure}[H]
\includegraphics[width=140mm]{./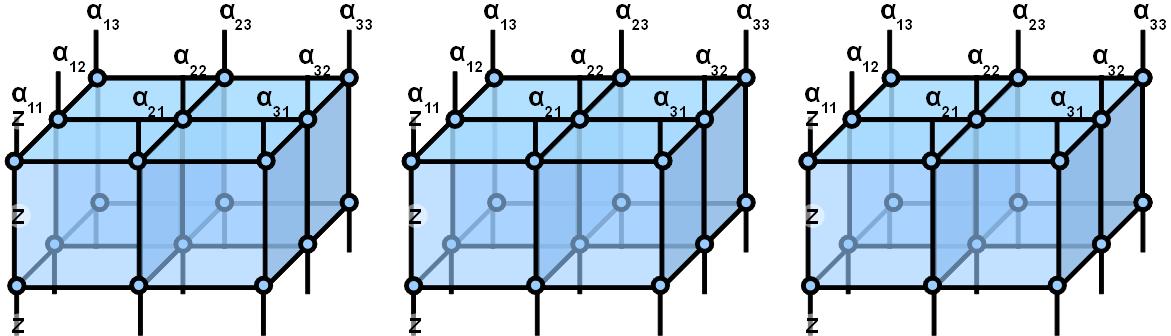}
\caption{Three small solid codes with a $Z$-type weld between their upper rough boundary. The $\alpha_{ij}$ symbols label identical qubits.}
\label{threeSmallWeldedSolids}
\end{figure}

Notice that the $\bar{X}$ operator does not change after welding as can be seen from remark \ref{anticommutingOperator}. It is still a membrane in one of the solids. The graph of flat-$Z$ regions does not change after the welding because $Z$-type quasi-particles can propagate through $Z$-type welds on rough boundaries without creating new $Z$-type quasi-particles. Hence the energy barrier for the $\bar{X}$ operator stays the same before and after the weld.

In analogy to welding surface codes together on smooth edges to make a solid code with an increased energy barrier for the $\bar{X}$ operator, we can weld solid codes together on rough boundaries to increase the energy barrier of the $\bar{Z}$ operator.  To start we label the rough boundaries of a solid code by $i$ and $j$ and then label the solid by its pair of boundaries $(i,j)$.  We can then weld the solid $(i,j)$ with the solid $(n,k)$ by identifying qubits of the boundary j with qubits of the boundary n and then doing a $Z$-type weld.  We are left with solids $(i,j)$ and $(j,k)$ welded on the boundary j. In this way we can weld solid codes to form any graph $G=(V,E)$. Figure \ref{2dWeldedSolidCode} shows a 2-d square graph of solids welded together. A 3-d cubic graph of solids gives a better energy barrier but is harder to show graphically.

\begin{figure}[H]
\includegraphics[width=70mm]{./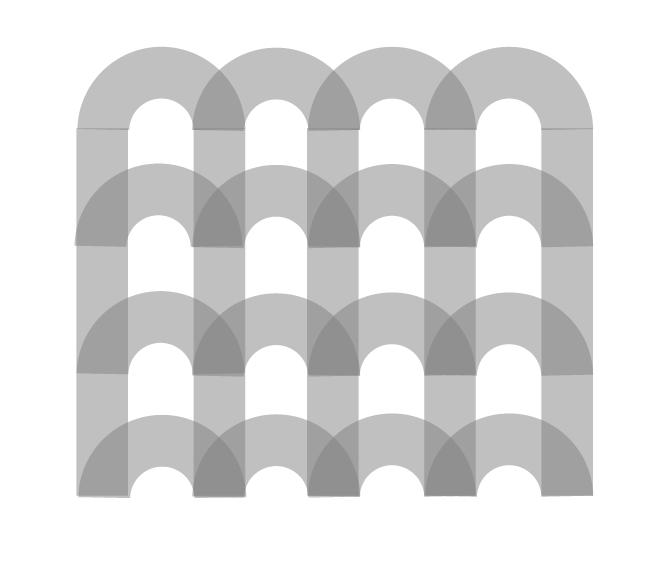}
\includegraphics[width=70mm]{./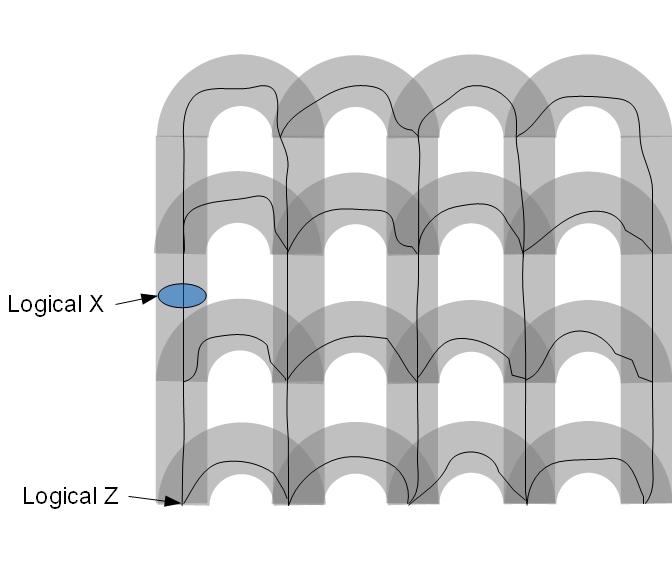}
\caption{Welded solid code with solids welded together in a 2d square lattice. Notice that the object as a whole is three dimensional.}
\label{2dWeldedSolidCode}
\end{figure}

Like with welding three solids together on a rough edge, welding any number of solids on rough boundaries to form the graph $G$, does not change the energy barrier of the $\bar{X}$ operator. 

\begin{lemma}
\label{noChange2FlatRegions}
Doing $Z$-type welds between rough boundaries $V=\{i\}$ of solid codes $E=\{(i,j)\}$ in a graph $G=(V,E)$ does not change the energy barrier of the membrane operator.
\end{lemma}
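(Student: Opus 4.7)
The plan is to show that the energy barrier of $\bar{X}$ in the welded code is $\Theta(d)$, matching the unwelded value from Lemma~\ref{solidCodeEnergyBarrier}. I will prove matching upper and lower bounds.

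For the upper bound, observe that by Remark~\ref{anticommutingOperator}, the $\bar{X}$ membrane remains supported in a single solid $s_0$ after welding. I would execute the same Pauli-$X$ sequence that saturates the $O(d)$ barrier in the unwelded solid, placing the membrane in the interior of $s_0$ so that all $X$-errors are disjoint from the shared qubits and from the qubits of every other solid. Each welded $Z$-stabilizer (of the form $g_1 g_2 W(g_1)$, combining half-plaquettes across the rough weld) is then violated after step $k$ if and only if its $s_0$-component $g_1$ is violated in the unwelded picture, since the sequence touches neither $g_2$ nor $W(g_1)$; the remaining welded $Z$-stabilizers either coincide with unchanged $Z$-stabilizers of $s_0$ or live entirely in other solids and are never violated. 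The step-by-step count of violated $Z$-stabilizers therefore matches the unwelded count, giving $O(d)$.

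For the lower bound, I would apply Lemma~\ref{parityLowerBound} to the welded code. The welding rule that $Z$-type quasi-particles pass through rough welds without splitting implies that each flat-$Z$ region of the welded code is obtained by gluing the corresponding tall thin surface across every solid connected through rough welds; crucially, when indexed by the $s_0$-component, these regions form exactly the same 2d grid as in the unwelded solid. For any Pauli-$X$ sequence producing $\bar{X}$, I would define a spin variable $\tau(w)$ at each smooth-weld position $w$ as the mod-$2$ sum, over all solids, of the $X$-error parity on the non-shared smooth-boundary qubits at $w$. Then the parity of the flat-$Z$ region at the 2d grid edge $(u,v)$ is $\tau(u)\oplus\tau(v)$, and the parity sum $\sum_A p_A$ equals the domain-wall length of $\tau$ on the 2d grid. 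Since $\bar{X}$ is logically equivalent to flipping a full sheet of spins associated with $s_0$'s grid, any trajectory of $\tau$ from the initial all-zeros configuration to the required final configuration must, by the 2d-Ising-model isoperimetric inequality on an $L\times L$ grid, cross a state of domain-wall length $\Omega(L)=\Omega(d)$, reproducing the lower bound of Lemma~\ref{solidCodeEnergyBarrier}.

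The main obstacle is the lower bound, specifically ruling out shortcuts that exploit $X$-errors outside $s_0$ or on shared qubits at rough welds. Here the key observation is that a shared qubit at a rough weld of degree $d_w \geq 2$ creates $d_w$ quasi-particles from a single $X$-error, so it is not a smooth-boundary qubit of any individual flat-$Z$ region and does not enter the parity sum directly. The subtle step is to verify that $X$-type stabilizer multiplications — in particular stars at vertices adjacent to the rough weld, which can individually flip $\tau(w)$ — cannot deform the required final $\tau$ configuration in a way that shortcuts the 2d isoperimetric lower bound on $s_0$'s grid.
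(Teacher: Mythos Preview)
Your approach is essentially the paper's: both use the parity lower bound (Lemma~\ref{parityLowerBound}) for the lower bound, and your explicit upper-bound sequence is what the paper compresses into ``which can be saturated if quasi-particles annihilate as soon as they are created.'' The paper's entire proof is two sentences: since $Z$-type quasi-particles pass freely through rough welds but still bifurcate at smooth welds, the graph of flat-$Z$ regions is structurally the same 2d-Ising grid as in Lemma~\ref{solidCodeEnergyBarrier}, so the parity bound gives the same $O(d)$.

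Your self-identified obstacle is an artifact of how you defined $\tau$. By excluding the shared smooth-weld qubits at the rough weld, you make $\tau(w)$ sensitive to individual $X$-stars near the weld, since such a star then touches an odd number of counted vertical edges at $w$. If instead you let $\tau(w)$ be the parity of $X$-errors on \emph{all} vertical edges at smooth-weld position $w$ (summed over all solids, shared qubits counted once), then every $X$-star operator acts on exactly two vertical edges at its own column and hence preserves $\tau(w)$. With this definition the merged flat-$Z$ region at grid-edge $(u,v)$ has parity $\tau(u)\oplus\tau(v)$, the final $\tau$ after any representative of $\bar X$ is the all-flipped configuration, and the 2d-Ising isoperimetric bound applies with no residual subtlety. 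This is precisely what the paper means by ``the graph of flat-$Z$ regions does not change''; you have correctly observed that the individual regions merge across solids, but the resulting graph is isomorphic to the original one, which is all the parity argument needs.
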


\begin{proof}
The graph of flat-$Z$ regions does not change because $Z$-type quasi-particles can pass through rough welds without creating new quasi-particles while simultaneously not passing through smooth welds where additional quasi-particles are created. Using the parity lower bound, we get the same energy barrier, which can be saturated if quasi-particles annihilate as soon as they are created. $\Box$
\end{proof}

\begin{corollary}By lemma \ref{noChange2FlatRegions} we can conclude that the welded solid code $G$ has an energy barrier of $O(d)$ for $\bar{X}$, for any solid code that is $O(d)$ qubits wide.
\end{corollary}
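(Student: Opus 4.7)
The plan is to invoke Lemma \ref{noChange2FlatRegions} together with Lemma \ref{solidCodeEnergyBarrier} in a nearly mechanical way, and then verify that the saturation half of the bound still goes through after welding.

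First I would observe that after the $Z$-type welds on the rough boundaries, the representative of $\bar{X}$ has not changed: by Remark \ref{anticommutingOperator}, the membrane operator of a particular constituent solid code $(i,j)\in E$ still anticommutes (as a welded generator) with exactly one welded $\bar{Z}$, so it is still a valid logical operator of the welded code. In particular, any Pauli-$X$ sequence that realizes $\bar{X}$ in the single solid code $(i,j)$ still realizes $\bar{X}$ in the welded code.

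Next I would apply the parity lower bound (Lemma \ref{parityLowerBound}) to the welded solid code with horizontal $Z$-plaquettes dropped. The key point, already packaged in Lemma \ref{noChange2FlatRegions}, is that $Z$-type quasi-particles pass through rough $Z$-welds without splitting and cannot leak through smooth welds (which do not exist in this construction between distinct solids), so the graph of flat-$Z$ regions of the welded code is exactly the disjoint union of the flat-$Z$ region graphs of the individual solids. Therefore the parity lower bound on any Pauli-$X$ sequence producing $\bar{X}$ is at least the parity lower bound within the single solid $(i,j)$ that contains the membrane, which by the proof of Lemma \ref{solidCodeEnergyBarrier} is $\Omega(d)$ from the 2-d Ising-model analysis.

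To finish, I would exhibit a Pauli-$X$ sequence saturating this bound: use the sequence from Lemma \ref{solidCodeEnergyBarrier} that builds $\bar{X}$ as a flat horizontal membrane inside the single solid $(i,j)$, annihilating pairs of flat-$Z$ quasi-particles as soon as they are created within that solid. Since this sequence acts trivially on all other solids and on every weld, it has the same energy cost $O(d)$ in the welded code as in the original single solid. Combining the matching upper and lower bounds yields an energy barrier of $\Theta(d)$ for $\bar{X}$, which gives the stated $O(d)$ bound. The only nontrivial part is the book-keeping showing that the flat-$Z$ region graph of the welded code really decomposes into the per-solid pieces, but this is precisely what Lemma \ref{noChange2FlatRegions} asserts, so no new work is required.
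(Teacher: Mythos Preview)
Your approach is essentially the paper's: the corollary is stated without a separate proof and is just Lemma~\ref{solidCodeEnergyBarrier} combined with Lemma~\ref{noChange2FlatRegions}, and your saturation argument via a horizontal membrane inside a single solid is exactly the mechanism the paper uses.

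There is, however, one inaccuracy in your lower-bound step. You assert that the flat-$Z$ region graph of the welded code is the \emph{disjoint union} of the per-solid graphs, but this contradicts the very fact you invoke, namely that $Z$-type quasi-particles pass through rough welds without splitting. Passing through freely means that the flat-$Z$ region at horizontal position $(m,n)$ in one solid \emph{merges} with the flat-$Z$ region at position $(m,n)$ in every solid sharing a rough weld with it; since the graph $G$ of solids is connected, all of these merge into a single flat-$Z$ region. What Lemma~\ref{noChange2FlatRegions} actually asserts is that the flat-$Z$ region graph \emph{does not change}: after this merging, it is still the same $d\times d$ square-grid (2-d Ising) graph as for a single solid, not a disjoint union of copies of it. The parity lower bound on that unchanged graph is still $\Omega(d)$, so your conclusion survives, but the picture you describe and the route by which the bound is obtained should be corrected to match the merged-region structure.
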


Notice that each solid $(i,j)$ is a region where $X$-type quasi-particles can move around without creating new quasi-particles. That is, $(i,j)$ labels flat-$X$ regions and i and j label rough boundaries where new quasi-particles are created. By the parity lower bound, the energy barrier of the welded solid code $G$ is lower bounded by the Ising model corresponding to the same graph $G$ with the reinterpretation that vertices $V$ are qubits and edges $E$ are $ZZ$ interactions. This is precisely how we construct the code with a power law energy barrier that proves theorem \ref{weldedCodeTheorem}, namely that a 3-d code, with side length $L$, with an energy barrier of $O(L^{\frac{2}{3}})$ exists.

\begin{proof}[Proof of theorem \ref{weldedCodeTheorem}:]
We find the energy barrier of $\bar{X}$ and $\bar{Z}$ and then tune them to maximize the energy barrier. We make the graph of solids, $G$, a 3-d cubic lattice that is $R$ solids wide. By the parity lower bound the energy barrier of $\bar{Z}$ will be lower bounded by $O(R^2)$, like for the Ising model in three-dimensions.  This bound can be saturated if in each solid we annihilate pairs of $X$-type quasi-particles as soon as they are created. Simultaneously the energy barrier for the membrane operator in the solid code is $O(d)$, by lemma \ref{solidCodeEnergyBarrier}, and does not change after welding into the graph $G$ by lemma \ref{noChange2FlatRegions}.  If the entire code is $L$ qubits wide, each solid code is $d$ qubits wide and the welded solid code is $R$ solids wide then the number of qubits is $N=O(L^3)=O(d^3 R^3)$. The logical-$Z$ and logical-$X$ operators have energy barriers of $O(R^2)$ and $O(d)$ respectively with distances of $O(R^3 d)$ and $O(d^2)$ respectively. If we relate the two scales $d$ and $R$ by requiring that $d=O(R^\alpha)$ then the energy barriers of the logical-$X$ and logical-$Z$ operators will be $O(R^2)=O(N^{\frac{2}{3(1+\alpha)}})$ and $O(d)=L(N^{\frac{\alpha}{3(\alpha+1)}})$ respectively. The energy barrier is maximized when the two energy barriers are equal since they have opposite slopes with respect to $\alpha$. The maximum happens when $\alpha=2$. Hence the minimum energy barrier for the welded solid code can be tuned to be $O(N^{\frac{2}{9}})=O(L^{\frac{2}{3}})$ with a minimum distance of $\min(d^2,dR^3)=O(L^{\frac{4}{3}})$.  $\Box$
\end{proof}

\section{Discussion}

Welding gives a constructive procedure for producing new codes, i.e. one can simply combine the shape of two logical operators, and it allows one to easily consider more general embeddings of qubits other than periodic lattices. As it was shown in this paper, welding can be used to combine spatially local, bounded density codes, to produce other spatially local, bounded density, codes while simultaneously creating a larger energy barrier by causing bifurcations in the logical operators.  

Although the welded solid code has a power law energy barrier, it does not have a phase transition \cite{kamil2012thermowelding}.  It seems likely that, like the cubic code \cite{haah2012quantum}, the welded solid code will have an increased storage time up to a maximum size where entropic effects from the bifurcating string like operator takes over.

The shape of the logical operators is crucial. It was shown \cite{yoshida2011feasibility} that in local translation-invariant codes with a bounded number of qubits, that logical operators are either points, strings, planes or volumes.  The logical operators $\bar{X}$ and $\bar{Z}$ must overlap at a single position to anticommute.  Strings and points have constant energy barriers and only points and lines can intersect volumes and flat sheets at a single point.  Two flat sheets can intersect at a point, but only in four-dimension or higher as in the 4-d toric code \cite{dennis2002topological}.  

The welded solid code gets around the problem of geometry of logical operators by having the code be defined on both a fine grained scale and a course grained scale.  The fine grained scale gives a membrane operator and bifurcating string like operators. When course graining, the membrane operator becomes a point-like operator and the bifurcating string like operator looks like a volume operator. 

Another way around the problem of geometry is to use curved logical operators.  Two curved sheets can intersect at a single point in three dimensions.  For instance, if the logical-$\bar{X}$ operator was a sheet on the surface of a sphere then we could make $\bar{Z}$ have the shape of a cone that intersects the surface of the sphere at a single qubit. This sphere-cone construction also gives an energy barrier of $O(N^{\frac{2}{9}})$ which is the same as the welded solid code. Likely an energy barrier of $O(L)=O(N^{1/3})$ will be necessary, like in the 2-d Ising model, for a phase transition.   

Yet another way of viewing the welded solid code is as a five dimensional homology code embedded in three dimensions. I say this because if we had used the a 2-d square graph for welding solids together then we would have had something equivalent to the 4-d toric code embedded in 3-d in the sense that if the solids were two qubits tall, then we would have star operators for both $X$ and $Z$-type stabilizers. We might be able to make D-dimensional homology codes local in three dimensions using welding.

Both the sphere-cone construction and the welded solid codes require that there are large numbers of qubits that are in the same space yet do not interact which could be a problem for implementing such a Hamiltonian experimentally.

In this paper, welding has been primarily applied to toric-like codes in different dimensions. This was because surface codes and solid codes give clearly defined flat regions that are easy to analyze. However, welding is a general procedure and it will be interested to see what different types of codes can be constructed via welding?

\paragraph{Acknowledgments} Thank you Aram Harrow and Steve Flammia for the stimulating discussions and helpful suggestions throughout the writing process. Thank you Aram Harrow for helping to clarify the parity bound.  This work was supported by NSF grant 0829937, DARPA QuEST contract FA9550-09-1-0044 and IARPA via DoI NBC contract D11PC20167.  The U.S. Government is authorized to reproduce and distribute reprints for Governmental purposes notwithstanding any copyright annotation thereon. The views and conclusions contained herein are those of the author and should not be interpreted as necessarily representing the official policies or endorsements, either expressed or implied, of IARPA, DoI/NBC, or the U.S. Government.

\bibliography{refs}

\begin{thebibliography}{10}

\bibitem{kamil2012thermowelding}
In preparation.

\bibitem{alicki2008thermal}
R.~Alicki, M.~Horodecki, P.~Horodecki, and R.~Horodecki.
\newblock On thermal stability of topological qubit in kitaev's 4d model.
\newblock {\em Arxiv preprint arXiv:0811.0033}, 2008.

\bibitem{bacon2006operator}
D.~Bacon.
\newblock Operator quantum error-correcting subsystems for self-correcting
  quantum memories.
\newblock {\em Physical Review A}, 73(1):012340, 2006.

\bibitem{bombin2009quantum}
H.~Bombin and MA~Martin-Delgado.
\newblock Quantum measurements and gates by code deformation.
\newblock {\em Journal of Physics A: Mathematical and Theoretical}, 42:095302,
  2009.
\newblock arXiv:quant-ph/0605094.

\bibitem{bravyi2011analytic}
S.~Bravyi and J.~Haah.
\newblock Analytic and numerical demonstration of quantum self-correction in
  the 3d cubic code.
\newblock {\em Arxiv preprint arXiv:1112.3252}, 2011.

\bibitem{bravyi2011energy}
S.~Bravyi and J.~Haah.
\newblock Energy landscape of 3d spin hamiltonians with topological order.
\newblock {\em Physical Review Letters}, 107(15):150504, 2011.
\newblock arXiv:1105.4159v1.

\bibitem{bravyi2009no}
S.~Bravyi and B.~Terhal.
\newblock A no-go theorem for a two-dimensional self-correcting quantum memory
  based on stabilizer codes.
\newblock {\em New Journal of Physics}, 11:043029, 2009.
\newblock arXiv:0810.1983v2.

\bibitem{calderbank1996good}
A.R. Calderbank and P.W. Shor.
\newblock Good quantum error-correcting codes exist.
\newblock {\em Physical Review A}, 54(2):1098, 1996.
\newblock arXiv:quant-ph/9512032v2.

\bibitem{castelnovo2007entanglement}
C.~Castelnovo and C.~Chamon.
\newblock Entanglement and topological entropy of the toric code at finite
  temperature.
\newblock {\em Physical Review B}, 76(18):184442, 2007.
\newblock arXiv:0804.3591v2.

\bibitem{dennis2002topological}
E.~Dennis, A.~Kitaev, A.~Landahl, and J.~Preskill.
\newblock Topological quantum memory.
\newblock {\em Journal of Mathematical Physics}, 43:4452, 2002.
\newblock arXiv:quant-ph/0110143v1.

\bibitem{gottesman1997stabilizer}
D.~Gottesman.
\newblock Stabilizer codes and quantum error correction.
\newblock {\em Arxiv preprint quant-ph/9705052}, 1997.

\bibitem{haah2011local}
J.~Haah.
\newblock Local stabilizer codes in three dimensions without string logical
  operators.
\newblock {\em Physical Review A}, 83(4):042330, 2011.
\newblock arXiv:1101.1962v2.

\bibitem{haah2012quantum}
J.~Haah and S.~Bravyi.
\newblock Quantum memory on topological spin glass.
\newblock {\em Bulletin of the American Physical Society}, 2012.

\bibitem{kitaev2003fault}
A.Y. Kitaev.
\newblock Fault-tolerant quantum computation by anyons.
\newblock {\em Annals of Physics}, 303(1):2--30, 2003.
\newblock arXiv:quant-ph/9707021v1.

\bibitem{kitaev1997proceedings}
A.Y. Kitaev, O.~Hirota, AS~Holevo, and CM~Caves.
\newblock Proceedings of the third international conference of quantum
  communication and measurement.
\newblock 1997.

\bibitem{nielsen2002quantum}
M.A. Nielsen, I.~Chuang, and L.K. Grover.
\newblock Quantum computation and quantum information.
\newblock {\em American Journal of Physics}, 70:558, 2002.

\bibitem{steane1996multiple}
A.~Steane.
\newblock Multiple-particle interference and quantum error correction.
\newblock {\em Proceedings of the Royal Society of London. Series A:
  Mathematical, Physical and Engineering Sciences}, 452(1954):2551--2577, 1996.
\newblock arXiv:quant-ph/9601029v3.

\bibitem{yoshida2011feasibility}
B.~Yoshida.
\newblock On the feasibility of self-correcting quantum memory.
\newblock {\em Annals of Physics}, 2011.
\newblock arXiv:1103.1885v3.

\end{thebibliography}
\bibliographystyle{plain}

\end{document}